\newif\ifappendix
\title{Ranking with Popularity Bias: User Welfare under Self-Amplification Dynamics}
\author{Guy Tennenholtz }
\affiliation{%
  \institution{Google Research}
}
\author{Nadav Merlis}
\affiliation{%
  \institution{CREST, ENSAE}
}
\author{Martin Mladenov}
\affiliation{%
  \institution{Google Research}
}
\author{Robert L. Axtell}
\affiliation{%
  \institution{George Mason University}
}
\author{Craig Boutilier}
\affiliation{%
  \institution{Google Research}
}
\begin{document}


\begin{abstract}
   While \emph{popularity bias} is recognized to play a crucial role in recommmender (and other ranking-based) systems, detailed analysis of its impact on collective \emph{user welfare} has largely been lacking. We propose and theoretically analyze a general mechanism, rooted in many of the models proposed in the literature, by which item popularity, item quality, and position bias jointly impact user choice. We focus on a standard setting in which user utility is largely driven by item quality, and a recommender attempts to estimate it given user behavior. Formulating the problem as a non-stationary contextual bandit, we study the ability of a recommender policy to maximize user welfare under this model. We highlight the importance of exploration, \emph{not to eliminate} popularity bias, but to mitigate its negative impact on welfare. We first show that naive \emph{popularity-biased recommenders} induce linear regret by conflating item quality and popularity. More generally, we show that, even in linear settings, \emph{identifiability} of item quality may not be possible due to the confounding effects of popularity bias. However, under sufficient variability assumptions, we develop an efficient optimistic algorithm and prove efficient regret guarantees w.r.t.\ user welfare.  We complement our analysis with several simulation studies, which demonstrate the negative impact of popularity bias on the performance of several natural recommender policies.
\end{abstract}

\maketitle

\section{Introduction}

The study of growth dynamics in multi-agent systems has a long history that spans many disciplines and finds application in various domains. Among these are dynamic models of the wealth of individuals in a community \citep{pareto1896wealth}, the size of firms in a labor economy \citep{axtell2006firmsize}, and the popularity of individuals or items in social/citation networks \citep{barabasi1999emergence}. A prominent finding is that growth is often driven by positive feedback mechanisms, i.e., the larger, more popular, or richer an entity is, the more likely it is to grow further. Such ``rich-get-richer'' phenomena have been explained by a variety of mechanisms, such as \emph{preferential attachment} \citep{simon1955powerlaw, barabasi1999emergence}, most of which generate \emph{power-law} size distributions.

In \emph{recommender ecosystems}, this phenomenon has often been described by the umbrella term \emph{popularity bias} \citep{bellogin2017statistical,abdollahpouri2019managing,abdollahpouri2021user,wei2021model,elahi2021investigating,ahanger2022popularity}. Such bias is often assumed to arise when a \emph{recommender system~(RS)} recommends popular items more frequently, which in turn increases their consumption by users, further amplifying their popularity \citep{abdollahpouri2019managing,abdollahpouri2020multi}. This can induce or magnify common long-tail effects.

While various mechanisms to explain popularity bias have been proposed \citep{abdollahpouri2019managing,zheng2020disentangling,chen2022co,he2022mitigating}, few have directly incorporated or theoretically analyzed the interventions needed to mitigate the impact of the \emph{RS's ranking policy}. Indeed, the growth dynamics of popularity in recommender ecosystems are complicated by the presence of the RS's ranking algorithm \citep{baeza2005query,zoghi2017online,wang2018streaming,amato2019sos}. RS policies (or \emph{rankers}) can both amplify popularity feedback (e.g., the phenomenon of ``going viral'' \citep{west2011going}) as well as mitigate it (e.g., using mechanisms involving fairness of exposure \citep{singh2018fairness}). We argue that the ranker must be incorporated directly into any mechanistic account of popularity~bias. 

In this work we propose a theoretical framework for a more nuanced study of popularity bias in RSs, with an emphasis on its impact on collective user utility, or \emph{social welfare}, and its interaction with various types of RS policies. Much of our framework is standard, assuming that a user's utility is primarily dictated by the \emph{quality} of the recommended items they select, and that the RS must attempt to assess this quality via its observations of user selections.\footnote{We use the term ``quality'' to reflect inherent quality, relevance, and any other RS- and population-independent factors that determine a user's satisfaction with a recommended item in a given context.} 
We outline a general mechanism by which an item's popularity can \emph{influence} a user's choice/consumption behavior in conjunction with other relevant factors; namely, item quality and position bias. 

Our work emphasizes the potential \emph{confounding} of popularity and quality, inducing an identifiability problem for any social-welfare optimal RS policy. We address this by formalizing our problem as a \emph{nonstationary contextual bandit} \citep{faury2020improved,abeille2021instance,amani2021ucb}. We provide regret analyses to show whether, and under what conditions, a ranking policy can optimize long-term expected user utility. Specifically, we show that \emph{an optimal ranking policy is not achievable} in the general case due to the confounding effect of popularity on the estimation of item quality. However, we prove that, under specific variability assumptions, popularity bias is \emph{statistically identifiable}. This allows us to develop an \emph{efficient, optimistic} algorithm with sublinear regret which, under these conditions, decouples quality from popularity, while exploiting the positive aspects of the feedback mechanism. We also describe and analyze other natural RS ranking policies which can either \emph{amplify or mitigate the negative welfare effects of popularity bias} under various observability assumptions. In particular, we show that \emph{exploration} is essential if an policy is to overcome such effects.

Our contributions are as follows. In \Cref{section: problem setup} we propose a novel framework that incorporates item quality, item popularity, and rank/position bias in a non-stationary contextual user choice model. In \Cref{section: overcoming popbias} we study the optimization and exploration problem under this model, and provide the \emph{first theoretical regret guarantees for popularity bias}. In particular, we show that the ranking problem is generally not identifiable (even under a linear class assumption) in the presence of popularity-biased users, and prove lower bounds. We propose and analyze an asymptotically optimal ranking policy, and show that, under a sufficient diversity assumption, an efficient solution can be developed. Indeed, we formulate a UCB-style algorithm and prove efficient regret guarantees.
In \Cref{sec:experiments}, we describe simulation studies to assess the performance and dynamics of popularity bias under our methods. We also compare several natural baselines, showing that they are prone to converge to suboptimal item-selection distributions with linear regret. 
Our research represents a significant step in the understanding of the complex dynamics of self-amplification of popularity in RSs and its implications on user welfare, and provides the first regret guarantees for ranking policies in settings with popularity bias.

\section{Popularity Feedback Loops}
\label{sec:background}

Before addressing the problem of optimal rankers, we first describe informally how popularity bias can negatively impact the expected collective utility an RS generates for its users, i.e., social welfare. To illustrate the factors that contribute to popularity bias, we consider three distinct ways in which it can be induced: (i) by an RS ranking policy, (ii) via inherent user bias, or (iii) by the behavior of suppliers of items (e.g., product vendors, content creators). Our model in Sec.~\ref{section: problem setup} incorporates both (i) and (ii)---we include mechanism (iii) here only for completeness. We formalize and assess the different forms of ranking policies discussed here in \Cref{sec:experiments}.
%

\subsection{Popularity Bias Caused by Rankers}
\label{section: pop-biased rankers}

Contemporary RSs often attempt to maximize user utility in a myopic or greedy fashion, causing them to underexplore. Moreover, they seldom account for certain structural biases, such as popularity bias, though rank/position bias is often included in some analyses. Such misspecifications can generate undesirable outcomes and induce linear regret 
(see below).

For simplicity, suppose users are unbiased by popularity, only selecting items based on their quality and rank position. In this scenario, popularity bias might be induced by a \emph{popularity-driven ranker}, which ranks items presented to a user according to their current popularity (e.g., the number of selections that item has so far). This ranking strategy is natural for an RS which cannot observe item quality, but attempts to indirectly estimate item quality by using \emph{popularity as a proxy}.

Such popularity-driven rankers gradually \emph{amplify} the positions of higher-ranked items. Particularly, though users' selections are positively correlated with quality, the induced feedback loop may lead to higher ranked items to increase in popularity due to the rank-bias in user selections. This feedback loop can converge to a state where lower-quality items are promoted at the expense of higher-quality items, and is among the most widely discussed popularity-bias-inducing mechanisms in the literature \citep{salganik2006experimental,ciampaglia2018algorithmic,yalcin2021investigating}. We demonstrate the negative effect this can have on social welfare in Sec.~\ref{sec:experiments}.



\subsection{Popularity-Biased Users}
\label{section: pop-based users}

We next turn to situations where feedback loops emerge due to \emph{popularity bias in user selections}.
In this setting, a low-quality, but more popular, item might have higher probability of selection by a user than a higher-quality, less popular item. We call such users \emph{popularity-biased}, but distinguish two reasons for this bias:
\begin{itemize}
    \item First, a user may derive utility based only on the quality of the item selected, but may be unable to directly or accurately observe an item's quality (e.g., a video's quality may only be assessed fully \emph{after} watching it). In this case, popularity may be used as a noisy indication of quality by the user.\footnote{This is more likely to be true in RSs that provide an indication of popularity, e.g., view counts, alongside a recommended item.}
    \item Second, a user may actually derive some utility from an item's popularity. This might be the case due to ``network effects'' when, say, the value of consumed content (e.g., music, movies, news) is greater when it can be discussed with more friends, or when it increases connection to an in-group.
\end{itemize}
In either case, users may select items based on popularity at the expense of quality, or other utility-bearing attributes.

Since most RSs exploit user selections to estimate item quality/user utility, any ranker that ignores the effect of popularity, will induce some form of popularity bias. Indeed, we will examine such a \emph{popularity-oblivious ranker} in Sec.~\ref{sec:experiments}. Not surprisingly, when user utility depends only on item quality, but selections are biased by popularity (the first condition above), the induced popularity bias does significant harm to social welfare. That said, when user utility is largely dictated by item popularity (the second condition), the popularity-oblivious ranker may perform reasonably well. However, item quality will play a significant role in any realistic utility model and the inability to disambiguate popularity and quality will serve as an impediment to the generation of high social welfare.

\subsection{Resource (or Skill) Bias}

For completeness, we consider a final mechanism in which popularity bias is implicit in \emph{dynamic item quality}. Consider a model in which users consume new, changing items (e.g., dynamic content) offered by a set of \emph{providers}.
In many cases, the popularity of the items offered by a provider influences their \emph{proficiency} at item generation. This may be due to, for instance, greater resource availability (e.g., due to revenue generation) or improved generation skill (e.g., due to more opportunities, social incentives, etc.). 

Such \emph{resource (or skill) bias} can be modeled by the \emph{increase in quality}
of a provider's new items
w.r.t. the number of selections of their past items.
A typical RS ranker, which attempts to maximize user utility greedily, recommends items with highest estimated quality. Nevertheless, the increases in quality offered by any item provider are driven directly by the probabilistic realization of past selections of their items. This in turn may induce a feedback loop which amplifies the increase in quality of the items offered by providers with more past selections.
Moreover, ``small'' or less popular providers are not given the same opportunity to improve their item quality, as they do not yet have the same level of resources or skill to do so. 

Of note is the fact that this type of popularity bias may have both positive and negative effects. For example, this bias may actually improve long-term user welfare when skill/quality increases non-linearly with popularity. On the other hand, it may have a negative impact if providers stop creating items due to a small audience of users, or other negative effects on item diversity or ``tail" items. Moreover, issues of provider equity and fairness will often be at play \citep{abdollahpouri2020connection}. While we do not dwell further on this, we note that such resource bias can be directly modeled in the user choice model  we propose below.

\section{Problem Definition}
\label{section: problem setup}

We begin by providing our problem formulation, including the introduction of a general \emph{user choice model} which incorporates item quality (or affinity), rank/position bias, and importantly, \emph{popularity bias} in the determination of a user's probability of selecting or consuming a specific recommended item. We later show how popularity bias complicates the computation of optimal RS policies as it confounds the ability of the RS to estimate inherent item quality. 

\subsection{Setup}
We use lower case bold letters for vectors (e.g., $\bs{x}$) and upper case bold letters for matrices (e.g.,  $\bs{V}$). For $\bs{u} \in \R^M$ let
\begin{align*}
    z_i(\bs{u})
    =
    \frac{\exp\brk[c]*{u_i}}{1 + \sum_{j=1}^M \exp\brk[c]*{u_j}} \, , \quad 1 \leq i \leq M,
\end{align*}
and $z_0(\bs{u}) = 1 - \sum_{i=1}^M z_i(\bs{u})$.
Finally, let $\bs{z}(\bs{u}) = \brk*{z_1(\bs{u}), \hdots, z_M(\bs{u})}^T$ be the softmax function.

Let $\gD$ be a corpus of \emph{recommendable} items (e.g., articles, music tracks, videos, products) and $\gU$ be a set of users. At each time~$t$, a user $u_t \in \gU$, sampled from some distribution $\gP_{\gU}$, seeks a recommendation from the RS. The RS uses a \emph{ranking policy (or ranker)} $\gR$ to select $M$ items from $\gD$, ranks them in some chosen order, and presents the resulting (ordered) \emph{slate} of items $\bs{s}_t$ to the user $u_t$. The user then selects at most one item from the slate. We write $\bs{s}_t = \brk*{I_{1,t}, I_{2,t}, \hdots, I_{M,t}}$, where $I_{i,t}$ is the item at position/rank $i \leq M$ at time $t$. Let $\gS$ be the set of possible slates. 

Let $c_t \in \brk[c]*{I_{i,t}}_{i=1}^M \cup \brk[c]*{\emptyset}$ denote the item selected at time $t$, and $\bs{c}_{1:t} = \brk*{c_1, \hdots, c_{t}}$ the sequence of selections up to time $t$. We use $\mathbb{E}_{t}$ to denote expectation w.r.t.\ all histories through time $t$. Also let
\begin{align*}
n_t(I) = \sum_{k=1}^{t-1} \mathbbm{1}\brk[c]*{\text{item $I$ was selected at time $k$}}
\end{align*}
be the number of selections of item $I$ up to time $t$ (exclusive). Finally, let
$\bs{n}_t(\bs{s}) = \brk*{n_t(I_{1}), \hdots, n_t(I_{M})}^T$, and $\bs{n}_t = \bs{n}_t(\bs{s}_t)$.

The goal of the ranking policy is to choose a slate of items which induces a user selection that has high value to the user. This is complicated by two factors. First, the RS is not given the value of the items to any specific user---it must estimates these values based on past user selections. In other words, it must solve the underlying \emph{exploration (or bandit)} problem. Second, the process by which users make selections from slates may itself be rather involved, complicating this exploration problem. We next outline the selection (i.e., \emph{user choice}) model we adopt, then turn to the RS objective and a formalization of the induced exploration problem.

\subsection{User Choice Model} 
A \emph{user choice model} determines the probability with which a user $u$ selects a specific item $I$ from a slate $\bs{s}$. We introduce a choice model which incorporates quality, position, and popularity bias as three factors influencing a user's choice. We note that each of these factors is separately discussed in the literature \citep{tejeda2014quality,abdollahpouri2017controlling,collins2018position}. However, as we show later, when user choice is influenced by all three (and specifically, quality and popularity), the underlying exploration problem may be not identifiable.

The first factor, \emph{quality}, reflects the inherent 
value a user derives from an item,
and is captured by a \emph{quality bias} function ${\qualitybias: \gU \times \gS \mapsto \R^M}$ that, for any user $u$, specifies the user-specific quality of each item in slate $\bs{s}$. Quality bias is perhaps the most common factor in classic models of user choice \citep{mcfadden1973conditional,mcfadden1981econometric}, RS research \citep{balabanovic1997fab,li2005hybrid,mooney2000content}, and numerous other areas.

The second factor, \emph{rank bias}, is given by a vector $\rankbias \in \R^M$. This factor determines the impact of an item's position in a slate on user choice. Rank bias is another factor that is commonly modeled in user choice models in search \citep{wang2016learning,joachims2017unbiased,wang2018position}, computational advertising \citep{dave2014computational} and RS research \citep{vargas2011rank,guo2019pal}. For example, the cascade model \citep{kveton2015cascading,zong2016cascading,richardson2007predicting,craswell2008experimental,kveton2022value}, which has been widely used to model user choice in recommendation, models the inherent bias of users to select higher ranked positions in a slate of recommended items.

Finally, the user-dependent \emph{popularity bias} depends on the sequence of selected items (by all users) up to time $t$, and is given by $\popularitybias_t: \gU \times \gS \times \gD^{t-1} \mapsto \R^M$.  Popularity bias has been studied and acknowledged to be a major factor of critical concern for RSs \citep{abdollahpouri2020popularity,perez2019analysis,celma2008hits,jannach2015recommenders,niemann2013new}.

Given the three factors of quality, popularity and position bias, we are now ready to define our user choice model. For this, we define the \emph{user disposition} at time $t$ by
\begin{align*}
    \disp_t(u_t, \bs{s}_t, \bs{c}_{1:{t-1}}) = \qualitybias\brk*{u_t, \bs{s}_t} + \popularitybias_t\brk*{u_t, \bs{s}_t, \bs{c}_{1:{t-1}}} + \rankbias.
\end{align*}
Disposition combines the bias factors additively. While non-linear combinations of these biases are possible, we intentionally choose to analyze an additive model to lay the groundwork for theoretical research into popularity bias. This model allows us to develop rigorous theoretical insights, including the first upper/lower bounds and regret guarantees for ranking with popularity bias. 

User $u_t$ at time $t$ selects the item at position $i$ with probability
\begin{align}
\label{eq: click probability}
    z_i\brk*{\disp_t(u_t, \bs{s}_t, \bs{c}_{1:{t-1}})}, \quad
    1 \leq i \leq M;
\end{align}
and $z_0(\disp_t(u_t, \bs{s}_t, \bs{c}_{1:{t-1}})) = 1 - \sum_{i=1}^{M} z_i(\disp_t(u_t, \bs{s}_t, \bs{c}_{1:{t-1}}))$ is the probability of no selection.
Thus, selection probabilities are given by the softmax function, $\bs{z}\brk*{\disp_t(u_t, \bs{s}_t, \bs{c}_{1:{t-1}}))}$. 
\begin{remark*}
We note that previous work has provided empirical evidence for the applicability of our causal structure, conflating popularity and quality (see e.g., \citet{zheng2021disentangling}). Our work emphasizes the theoretical underpinnings of these factors, providing the first lower and upper bounds for this problem, including an impossibility result, as well as regret guarantees for an efficient ranking policy.
\end{remark*}


\subsection{Optimality Criterion.} 
\label{sec: optimality criteria}
An RS policy (or ranker) $\gR: \gU \times \gH_t \mapsto \gS$ determines the item slate shown to a user given an interaction history (here $\gH_t$ is the set of all length $t-1$ histories). Once presented, the user makes a selection using the choice model in \Cref{eq: click probability}. We measure the \emph{value} of $\gR$ using the total expected \emph{user utility} of selected items. Specifically, conditioned on $u_t\! =\! u, \popularitybias_t\! =\! \bs{b}, \bs{s}_t\! =\!\bs{s}$, we define
\begin{align*}
    v^{\gR}_t(u, \bs{b}, \bs{s})
    \!=\!\!\! \mathop{\mathop{\Exptn}_{u_{t+1}, \hdots, u_T}}_{\gR}
    \brk[s]*{
    \sum_{k=t}^T 
    \sum_{i=1}^M
    \! \utility\brk*{u_k, \bs{s}_k}_i
    \!\cdot\!
    z_i\brk*{\disp_k(u_k, \bs{s}_k, \bs{c}_{1:{k-1}})}
    } ,
\end{align*}
where $\utility: \gU \times \gS \mapsto \R$ is some measure of user utility. We generally equate user utility with item quality, i.e., $\utility \equiv \qualitybias$, though other measures are possible, e.g., $\utility \equiv \qualitybias + c \cdot \popularitybias$, for some $c > 0$.  Importantly, the RS only observes user selections, not utility (e.g., user-specific item quality), and typically must \emph{attempt to infer the latent utility} $\utility$ to optimize its policy. Finally, let $v^{\gR}_t = \expect*{u \sim P_{\gU}}{v_t^{\gR}(u, 0, S^{\gR}_1)}$ be the expected value of $\gR$ when popularity bias is initialized to zero. 

An \emph{optimal ranker} $\gR^*$ maximizes the expected return at time~$t$, defined recursively as ${
    \bs{s}_t^*(u,\bs{b}) \in \arg\max_{\bs{s} \in \gS} v_t^{\gR^*}(u, \bs{b}, \bs{s})}
$.
The \emph{cumulative regret} of a ranker $\gR$ at time $T$ is then defined by:
$$
    \text{Reg}_{\gR}(T)
    =
    v^{\gR^*}_1 - v^{\gR}_1.
$$
Our goal is to develop rankers with low regret, where possible. As we will see, this depends critically on the presence, interaction and RS observability of the bias terms that determine user disposition.

\begin{figure*}[t!]
\centering
\includegraphics[width=0.9\linewidth]{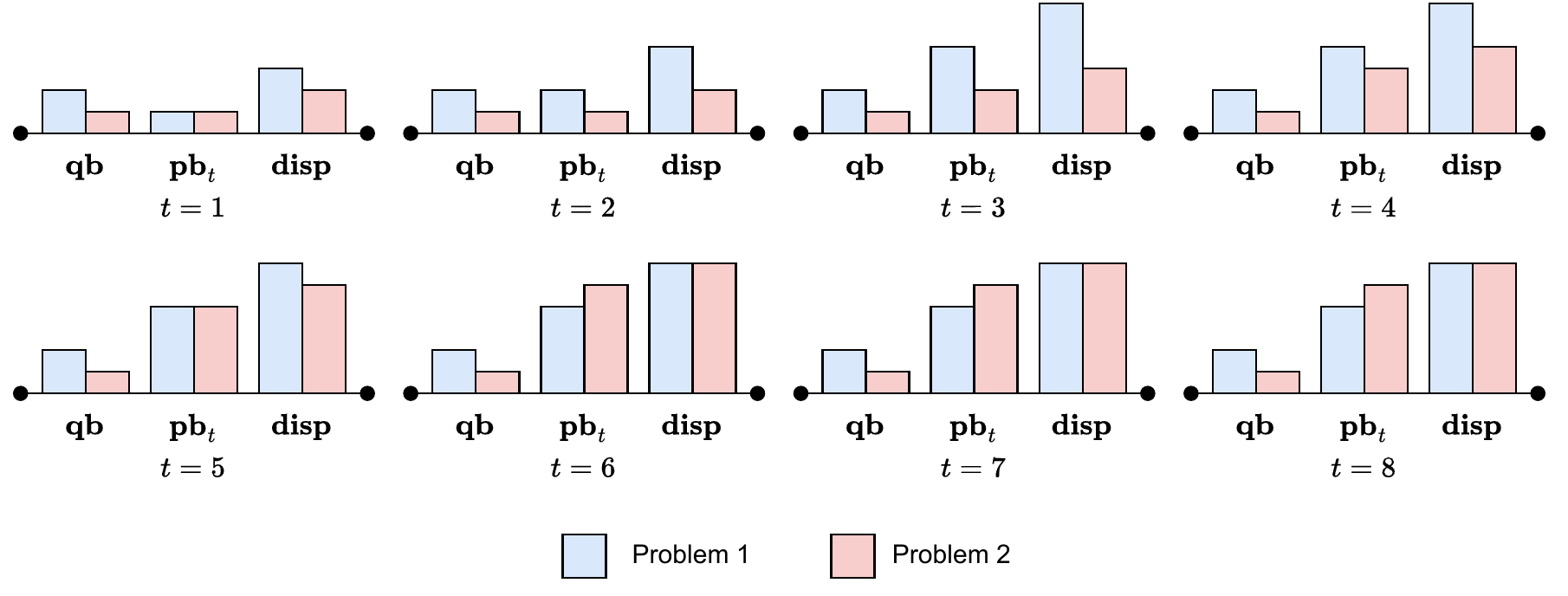}
\caption{\small A plot depicting the non-identifiability problem of ranking with popularity bias. As the popularity bias of Problem 1 (blue) and Problem 2 (red) increase, it eventually reaches saturation in both problems. In this example, the rank bias is zero, and the saturation of the two problems brings about the same user disposition $\disp = \qualitybias + \popularitybias$. After iteration $t=6$, the distribution of selection probability in the two problems is the same, rendering the quality bias not identifiable.}
\label{fig: impossibility}
\end{figure*}

\subsection{Additional Modeling Assumptions} 
We outline several key assumptions used in our analysis below. Many of our assumptions are standard in the bandit literature \citep{amani2021ucb}. First, we assume that all of the user choice factors are bounded.
\begin{assumption}[Bounded Disposition]
\label{assump: disp boundedness}
For any $u \in \gU$, $\bs{s} \in \gS$, $1 \leq i \leq M$, $\qualitybias\brk*{u, \bs{s}}_i, \popularitybias\brk*{u, \bs{s}, \bs{c}}_i, \rankbias_i \in [-1,1]$.
\end{assumption}
This assumption can be relaxed for bounds in arbitrary intervals, adding another factor to the regret related to scaling of these factors (see \citet{abeille2021instance}). We focus on the interval $[-1, 1]$ here for clarity of our results, though our approach is agnostic to this scaling.

Next, we assume that the popularity bias is monotonically increasing in the number of selections induced by the RS, and bounded. 
\begin{assumption}[Monotonic and Bounded Popularity]
\label{assump: popularity bias}
For any $1 \leq i \leq M$, $t \in \sN$, $u \in \gU$, $\bs{s} = \brk*{I_1, \hdots, I_M} \in \gS$, ${\bs{c} = \brk*{c_1, \hdots, c_{t-1}} \in \gD^{t-1}}$,
\begin{align*}
    \popularitybias_{t}\brk*{u, \bs{s}, \bs{c}}_i
    =
    \min\brk[c]*{
        \sum_{k=1}^{t-1}
        \alpha_{k,i}(I_{i}, c_k)
        ,
        ~~b_i(u, \bs{s})
        },
\end{align*} 

where $\alpha_{k,i} : \gD \times \gD \mapsto \R_+$  are unknown, non-negative selection ``increment''
mappings, and
$b_i: \gU \times \gS \mapsto \R_+$ are unknown upper bounds on popularity bias (for $1 \leq i \leq M, k \in \sN$). Let $\bs{b}(u, \bs{s}) = \brk*{b_1(u, \bs{s}), \hdots, b_M(u, \bs{s})}^T$, $b_{\max} = \max_{1 \leq i \leq M, u \in \gU, \bs{s} \in \gS} b_i(u, \bs{s})$, and $\alpha_{\min} = \min_{I \in \gD, 1 \leq i \leq M,t \in \sN} \alpha_{t,i}(I, I)$.
\end{assumption}
One popularity bias metric is simply the number of selections, i.e., $\popularitybias_{t}\brk*{u, \bs{s},\bs{c}}_i \propto n_t(I_i)$, given by increments ${\alpha_{k,i}(I, c) = \alpha_0(I) \cdot \indicator{I = c}}$. Such bias often arises when the number of selections, views, likes, etc. of items in a slate are shown to the user.\footnote{More generally, popularity bias may be influenced by \emph{exogenous} factors, and could also be a \emph{stochastic} function of the selection history.} In what follows, we assume $b_{\max} < \infty$ and $\alpha_{\min} > 0$.

Next, we model quality and popularity bias using parametric functions. Specifically, each user $u$ and slate $\bs{s}$ is associated with an embedding vector $\bs{x}_q(u,\bs{s}) \in \R^{d_q}$ and $\bs{x}_p(u, \bs{s}) \in \R^{d_p}$. In this work we focus on the disentangling of quality and popularity, and assume that rank bias $\rankbias$ is known (though it still affects user disposition/choice). 
\begin{assumption}[Parametric Form]
\label{assump: parametric form}
For $1 \leq i \leq M$ and any $u \in \gU$, $\bs{s} \in \gS$, we assume $\qualitybias\brk*{u, \bs{s}}_i = \bs{x}_q^T(u,\bs{s}) \bs{\theta}^*_i$ and $b_i(u, \bs{s}) = \bs{x}_q^T(u,\bs{s})\bs{\phi}^*_i$, where $\bs{\theta}^*_i \in \R^{d_q}, \bs{\phi}^*_i \in \R^{d_p}$ ($i\leq M$) are unknown parameter vectors.
\end{assumption}

Such formulations are standard in the contextual bandit literature (e.g., \citep{faury2020improved,abeille2021instance,amani2021ucb}). Note that slate-based features (e.g., $\bs{x_q}(u, \bs{s})$) generalize per-item features.
To see this, let $\bs{s} = \brk*{I_1, \hdots, I_M}$, and let $\bs{q}(u,\bs{s}) = \brk*{\tilde{\bs{q}}^T_u(I_1), \hdots, \tilde{\bs{q}}^T_u(I_M)}^T \in \R^{dM}$, where $\tilde{\bs{q}}^T_u(I_M) \in \R^d$ are per-item features. We obtain $\brk[s]*{\qualitybias\brk*{u, \bs{s}}}_i = \tilde{\bs{q}}_u(I_i)^T \bs{\theta}^*_i$. 

We also note that $b_i(u,\bs{s})$ of \Cref{assump: parametric form} relates to the ``saturated" popularity bias of \Cref{assump: popularity bias}. In fact, we do not assume any specific functional form over the transient, non-stationary popularity bias, as defined in \Cref{assump: popularity bias} (apart from monotonicity).

Finally, we adopt a standard boundedness assumption from the GLM and multinomial bandit literature for the parameters $\bs{\theta}$ and $\bs{\phi}$. 
\begin{assumption}[Parameter Boundedness]
\label{assump: param boundedness}
For any $1 \leq i \leq M$, $\norm{\bs{\theta}^*_i}_2 \leq L_q$, $\norm{\bs{\phi}^*_i}_2 \leq L_p$.
\end{assumption}
Our model's assumptions are tailored to highlight challenges in RSs with popularity bias while ensuring analytical feasibility. Of note is \Cref{assump: popularity bias}, which underscores the typical observation that popular items tend to gain more traction over time. While our model makes this specific assumption, it aligns well with empirical results from prior work \citep{zhu2021popularity,zhang2021causal,zhu2022evolution}. Our main contribution is laying the theoretical groundwork to better understand the impact of popularity bias on recommender systems.

In the next section we derive theoretical lower and upper bounds for ranking with popularity-bias, and provide a first provably efficient algorithm for mitigating popularity bias in RSs.

\section{Overcoming Popularity Bias}
\label{section: overcoming popbias}

We now turn to the problem of designing optimal ranking policies in the presence of popularity-biased users. We first show that, in the general case, the problem is non-identifiable (i.e., one cannot disambiguate popularity bias from quality). In turn, we prove that any algorithm induces linear-regret, providing explicit lower bounds. This suggests further modeling assumptions are needed to derive efficient algorithms with sub-linear regret. 

To achieve this, we devise a diversity criterion w.r.t.\ the user population. We then derive an exploration-explotation UCB-style algorithm which effectively decouples popularity and quality. To that end, we prove efficient regret bounds for our algorithm. Our analysis puts emphasis on the non-trivial requirement to decouple the confounding effect of quality and popularity, and the attention to exploration needed, to obtain efficient guarantees. For the rest of this section we focus on the case of $\utility \equiv \qualitybias$.

\subsection{Nonidentifiability of Qualities}
\label{sec: lower bound}

A natural approach to maximizing value involves the ranker estimating item qualities ($\qualitybias$). However, as the ranker only observes user selections, doing so requires disentangling the effect of popularity bias on observed behavior from that of quality and rank bias. Unfortunately, this problem is generally non-identifiable, as we demonstrate with a simple counterexample.

Consider the set of one-dimensional problems with dimension $d=1$, slate size $M=1$, and two items $\gD = \brk[c]*{I_1, I_2}$, with qualities $q(I_1) = 1$, $q(I_2) = -1$, and no rank bias ($\rankbias = 0$). The two problems differ only in their item features. Concretely, in Problem~1 we let $\theta^* = \epsilon, b(I_1) = 1-\epsilon, b(I_2) = 1+\epsilon$, and in Problem~2, $\theta^* = -\epsilon, b(I_1) = 1+\epsilon, b(I_2) = 1=\epsilon$, for some $\epsilon > 0$ (see \Cref{fig: impossibility} for an illustration). 

By monotonicity of popularity (\Cref{assump: popularity bias}), whenever the ranker presents item $I_i$ and it is selected by a user, its popularity bias increases by at least $\alpha_{\min}$. After $C\frac{b(I_i)}{\alpha_{\min}}\log\brk*{\frac{2}{\delta}}$ selections of $I_i$, with high probability, $n_t(I_i) \geq \frac{b(I_i)}{\alpha_{\min}}$. When this event occurs at time, say, $t_0$, popularity bias reaches ``saturation" (i.e., $b(I_i)$).
Then for any time $t \geq t_0$, $\popularitybias_{t}(u, I, \bs{c}) = b(u, I)$, and the selection probability of selecting $I_i$ in both Problems~1 and ~2 is identical:
\begin{align*}
    z_i(\qualitybias\brk*{u, I_i} + \popularitybias_t\brk*{u, I_i, \bs{c}}) 
    =
    z_i(\pm \epsilon + (1 \mp \epsilon))
    =
   e/(1-e).
\end{align*}
At this point, w.h.p., the selection distribution in both problems is the same. Notably, the distribution remains the same for all $t \geq t_0$; hence, the ranker is unable to differentiate quality from popularity further, resulting in linear expected regret. Formally, we have the following result (see proof in \Cref{appendix: impossibility proof}):
\begin{theorem}[Impossibility]
\label{theorem: lower bound}
    For any ranking algorithm $\gR$, there exists a ranking problem for which the expected regret is lower bounded by $\E{\text{Reg}_{\gR}(T)} \geq \Omega(T)$. 
\end{theorem}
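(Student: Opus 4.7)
The plan is to formalize the indistinguishability argument sketched in the paragraph preceding the theorem via a two-point, Le Cam-style lower bound. Fix a small constant $\epsilon > 0$ (to be chosen after the information bound) and take the two instances $P_1, P_2$ described in the text: both have dimension $d=1$, slate size $M=1$, item set $\gD = \{I_1, I_2\}$, zero rank bias, and monotone popularity bias as in \Cref{assump: popularity bias}, but with mirrored parameters, so that in $P_1$ the qualities are $(+\epsilon, -\epsilon)$ and the popularity ceilings $(1-\epsilon, 1+\epsilon)$, while $P_2$ flips both sign patterns. By direct computation, once each item's popularity bias has saturated at its ceiling $b(I_i)$, its disposition equals exactly $1$ in both instances, so the user-choice kernel is identical across $P_1$ and $P_2$ for any shown slate from that moment on. The optimal ranker always plays $I_1$ in $P_1$ and $I_2$ in $P_2$, and the per-round value gap from playing the wrong item is $\Theta(\epsilon)$.

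The argument proceeds in three steps. First, by \Cref{assump: popularity bias}, every selection of item $I_i$ raises its popularity bias by at least $\alpha_{\min}$, so at most $K := \lceil b_{\max} / \alpha_{\min} \rceil$ selections can occur before that item saturates; in particular, at most $2K$ rounds along any trajectory can contribute to any divergence between $P_1$ and $P_2$, and this bound is almost sure, independent of the ranker. Second, applying the chain rule for KL divergence to the interaction measures $\mathbb{P}_1^{\gR}, \mathbb{P}_2^{\gR}$ on length-$T$ transcripts (the policy term vanishes because the ranker is a measurable function of history), one obtains
\begin{equation*}
D_{\mathrm{KL}}\bigl(\mathbb{P}_1^{\gR} \,\|\, \mathbb{P}_2^{\gR}\bigr)
= \sum_{t=1}^T \mathbb{E}_1\!\left[D_{\mathrm{KL}}\bigl(\mathbb{P}_1(c_t \mid H_{t-1}) \,\|\, \mathbb{P}_2(c_t \mid H_{t-1})\bigr)\right]
= O\bigl(K \epsilon^2\bigr),
\end{equation*}
since the per-round divergence is $0$ once the shown item is saturated, and $O(\epsilon^2)$ otherwise by a Taylor expansion of the $1$-dimensional softmax around $\epsilon = 0$. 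Third, a standard Bretagnolle-Huber / change-of-measure argument applied to the suboptimal-pull count $N_{\mathrm{wrong}}(T)$ (whose "wrong direction" is opposite in $P_1$ and $P_2$) combined with the KL bound forces, for $\epsilon$ sufficiently small but independent of $T$,
\begin{equation*}
\mathbb{E}_1\bigl[N_{\mathrm{wrong}}(T)\bigr] + \mathbb{E}_2\bigl[N_{\mathrm{wrong}}(T)\bigr] \;\geq\; c\, T,
\end{equation*}
for some absolute constant $c > 0$. Converting suboptimal pulls to regret via the $\Theta(\epsilon)$ per-round gap then gives $\mathbb{E}[\mathrm{Reg}_{\gR}(T)] = \Omega(\epsilon T) = \Omega(T)$ on at least one of $P_1, P_2$, which is exactly the claimed bound.

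The main obstacle is the interplay between the algorithm's data-dependent policy and the saturation time: an adversarial ranker could concentrate on a single arm or randomize in a history-dependent way to extract as much pre-saturation signal as possible, so the KL bound must hold uniformly over all such strategies. The key insight that makes this work is the almost-sure cap of $2K$ informative rounds from step one, which follows from the monotonicity of popularity in \Cref{assump: popularity bias} regardless of how the ranker adapts. Beyond this, the only calculation that needs explicit care is the softmax Taylor bound $D_{\mathrm{KL}}(\mathrm{softmax}(d_1) \,\|\, \mathrm{softmax}(d_2)) \lesssim \|d_1 - d_2\|^2$ for dispositions $d_1, d_2$ bounded by \Cref{assump: disp boundedness}; every remaining step is off-the-shelf bandit lower-bound machinery.
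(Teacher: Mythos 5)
Your overall architecture is the same as the paper's (two mirrored instances whose choice kernels coincide after saturation, chain rule for KL on the interaction measure, a saturation-limited information budget, then a change-of-measure step forcing linearly many pulls of the wrong item in one instance), but step one contains a genuine error, and it is exactly the point you flag as the key insight. The quantity that is almost surely bounded by $K=\lceil b_{\max}/\alpha_{\min}\rceil$ is the number of pre-saturation \emph{selections} of an item, not the number of pre-saturation \emph{rounds in which it is shown}: under \Cref{assump: popularity bias} popularity only increments when the item is selected, so a ranker can display an unsaturated item arbitrarily many times while the user happens not to click, and every one of those rounds still contributes strictly positive KL (the click/no-click probabilities differ by $\Theta(\epsilon)$ between the two instances as long as the item is unsaturated). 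Hence there is no almost-sure cap of $2K$ on the number of divergence-contributing rounds, and your claim that the bound holds ``along any trajectory, independent of the ranker'' is false as stated.

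The gap is repairable, and the repair is what the paper actually does: because dispositions are bounded (\Cref{assump: disp boundedness}), every pre-saturation display is selected with probability at least a universal constant (the paper uses $1/4$), so the \emph{expected} number of pre-saturation displays per item under $\mathbb{P}_1$ is $O(1/\alpha_{\min})$; since your chain-rule decomposition already takes an expectation under $\mathbb{P}_1$ of the conditional per-round KLs, this expectation bound (the paper establishes it via a short coupling/recursive counting argument) restores your total bound $D_{\mathrm{KL}}(\mathbb{P}_1^{\gR}\,\|\,\mathbb{P}_2^{\gR}) = O(\epsilon^2/\alpha_{\min})$, with the per-round $O(\epsilon^2)$ Taylor estimate being fine. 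With that fixed, your third step goes through: you invoke Bretagnolle--Huber on the suboptimal-pull count, whereas the paper uses the Garivier-style data-processing inequality applied to $Z=n^p_{t+1}(I)/t$ together with an elementary lower bound on the binary KL; both yield that one of the two instances suffers $\Omega(T)$ wrong pulls, and the $\Theta(\epsilon)$ per-round quality gap converts this to $\Omega(T)$ regret exactly as in the paper.
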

We note that a similar problem can arise even if $\popularitybias$ does not saturate, as the softmax function $z_i$ itself reaches saturation with increasing popularity, rendering estimation  exponentially hard \citep{amani2021ucb}. In the remainder of this section, we take steps to mitigate the identifiability problem in order to achieve sublinear regret. We begin by defining a ranker that is optimal when popularity has reached saturation. Then, motivated by the lower bound in \Cref{theorem: lower bound}, we show how popularity and quality can be disentangled through a variability assumption on the quality and popularity features, and construct an efficient, UCB-style exploration algorithm.

\subsection{The Quality Ranker}
\label{sec:quality_ranker}

Learning the optimal ranker $\gR^\ast$ requires good estimation of popularity bias dynamics in order to plan effectively. However, since the Markov process underlying our model is non-recurrent (due to ever-increasing popularity), estimation of these dynamics is generally not possible without additional recurrence assumptions (e.g., the ability to reset the environment). To work around this issue, we first consider a baseline ranking policy which assumes stationarity of the Markov process. The stationary \emph{quality ranker} $\gR_q$ recommends slates as follows: 
\begin{align*}
    \bs{s}^q(u) 
    \in 
    \arg\max_{\bs{s} \in \gS}
    \sum_{i=1}^M
    \brk[s]*{\qualitybias(u, \bs{s})}_i 
    \cdot
    z_i(\qualitybias(u, \bs{s}) + \bs{b}(u, \bs{s}) + \rankbias).
\end{align*}
Notice that, when $b_i$ is not slate dependent (i.e., $b_i(u, \bs{s}) \equiv b_i(u)$), the slate recommended by
$\gR_q$ 
comprises
the $M$ highest-quality items in decreasing order.

The quality ranker is, of course, suboptimal in the general case. However, it is optimal in a counterfactual world where all popularity biases have saturated (hence, in which a steady-state distribution of the Markov chain has been reached).
The following results shows that the quality ranker $\gR_q$ is in fact \emph{asymptotically optimal}. Specifically, for large enough $T$, it achieves constant regret w.r.t.\ the optimal ranker.
\begin{theorem}[Asymptotic Optimality of the Quality Ranker]
\label{theorem: asymptotic optimality}
   Let $\delta \in (0,1)$ and assume $\alpha_{\min} > 0$. For any $T \geq 1$, with probability at least $1-\delta$, the regret of $\gR_q$ is 
   \begin{align*}
       \text{Reg}_{\gR_q}(T) \leq
       \gO\brk*{\frac{\abs{\gD}Mb_{\max}}{\alpha_{\min}}\log\brk*{\frac{\abs{\gD}}{\delta}}}.
   \end{align*}
\end{theorem}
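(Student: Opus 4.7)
The plan is a two-phase argument: a short \emph{warm-up} phase during which popularity biases are still transient and regret accrues at the trivial per-round rate, followed by a \emph{saturated} phase in which $\gR_q$ is per-round optimal by construction and no further regret accrues. The overall regret is then bounded by the warm-up length times the per-round reward range.

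First I would decompose the per-round instantaneous regret using the ``saturated'' reward $v^{\mathrm{sat}}(u, \bs{s}) := \sum_{i=1}^M \qualitybias(u, \bs{s})_i \cdot z_i\!\left(\qualitybias(u, \bs{s}) + \bs{b}(u, \bs{s}) + \rankbias\right)$ as a pivot. Writing $r_t^{\gR}$ for the instantaneous reward of ranker $\gR$ in round $t$, we have
\begin{align*}
r_t^{\gR^*} - r_t^{\gR_q}
= \underbrace{\bigl[r_t^{\gR^*} - v^{\mathrm{sat}}(u_t, \bs{s}^{\gR^*}_t)\bigr]}_{(A)}
+ \underbrace{\bigl[v^{\mathrm{sat}}(u_t, \bs{s}^{\gR^*}_t) - v^{\mathrm{sat}}(u_t, \bs{s}^q(u_t))\bigr]}_{(B)}
+ \underbrace{\bigl[v^{\mathrm{sat}}(u_t, \bs{s}^q(u_t)) - r_t^{\gR_q}\bigr]}_{(C)}.
\end{align*}
Term $(B)$ is non-positive by the definition of $\bs{s}^q(u)$ as the argmax of $v^{\mathrm{sat}}$. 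Each outer ``transient-to-saturation'' gap $(A),(C)$ is bounded by $\mathcal{O}(1)$ via the bounded-disposition Assumption and softmax Lipschitzness, and, crucially, \emph{vanishes} once every item in the corresponding slate has reached its popularity cap $b_i(u, \bs{s})$.

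Second, I would bound the length of the warm-up phase. The monotone/bounded popularity Assumption guarantees that each selection of item $I$ increments its popularity by at least $\alpha_{\min}$, so as soon as $n_t(I) \geq N := \lceil b_{\max}/\alpha_{\min} \rceil$ item $I$'s popularity has saturated at $b_i(u, \bs{s})$ uniformly over $(u, \bs{s})$. The bounded-disposition Assumption in turn gives a uniform lower bound $p_{\min} = \Omega(1/M)$ on the conditional selection probability of any presented item. A Chernoff concentration plus a union bound over the $|\gD|$ items then shows that, with probability at least $1-\delta$, a round budget of $T_0 = \mathcal{O}\!\left(\frac{|\gD| M b_{\max}}{\alpha_{\min}} \log\frac{|\gD|}{\delta}\right)$ suffices for every sufficiently-often-presented item to accumulate at least $N$ selections and saturate.

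Combining the two pieces: for $t < T_0$, the per-round regret is $\mathcal{O}(1)$ for a total contribution of $\mathcal{O}(T_0)$; for $t \geq T_0$, terms $(A)$ and $(C)$ vanish while $(B)\leq 0$, so no further regret accrues, yielding the claimed bound. The main obstacle lies in the concentration step for $\gR^*$: because our analysis cannot prescribe which slates $\gR^*$ plays, one must bound the contribution to $(A)$ by items that $\gR^*$ presents only rarely. The clean way is to observe that the total contribution of item $I$ to $(A)$ is at most a constant times the number of rounds in which $I$ appears in the displayed slate before its popularity saturates, which by the $p_{\min}$ bound is $\mathcal{O}(MN)$; summing over at most $|\gD|$ items recovers exactly the claimed $\mathcal{O}(|\gD| M b_{\max}/\alpha_{\min})$ rate. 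The symmetric argument controls $(C)$ for $\gR_q$, whose slate sequence is fully deterministic given the user process and hence easier to analyze.
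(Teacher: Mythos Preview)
Your approach is correct and shares the paper's core mechanism: lower-bound the selection probability of any presented item by $p_{\min}=\Omega(1/M)$ via bounded disposition, then use a concentration/coupling argument plus a per-item pigeonhole count to show that the total number of rounds in which some slated item is still unsaturated is at most $\mathcal{O}\!\bigl(\tfrac{|\gD| M b_{\max}}{\alpha_{\min}}\log\tfrac{|\gD|}{\delta}\bigr)$, each such round costing $\mathcal{O}(1)$.

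The one noteworthy difference is your three-term decomposition with $v^{\mathrm{sat}}$ as pivot. The paper simply asserts ``when all items in the slate are saturated, the quality ranker is optimal'' and bounds only the indicator over $\gR_q$'s slates; this quietly presupposes that $\gR^*$'s per-round reward cannot exceed $v^{\mathrm{sat}}(u_t,\bs{s}^q(u_t))$ once $\gR_q$ has saturated, which is not immediate since $\gR^*$ runs on its own trajectory with its own (possibly unsaturated) popularity state. Your term $(A)$ makes this transient on the $\gR^*$ side explicit, and your per-item ``presentations-before-saturation is $\mathcal{O}(MN)$'' observation is exactly the policy-agnostic pigeonhole argument needed to close that gap---the paper's coupling with i.i.d.\ $\mathrm{Bern}(p_{\min})$ variables is generic enough to apply to $\gR^*$ as well, but this is never spelled out. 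So your route is the same in substance but cleaner in bookkeeping; the paper, in turn, supplies the explicit coupling device (dominating the selection indicators by i.i.d.\ Bernoullis and summing the Hoeffding tail over $m\ge m_0$) that you only sketch as ``Chernoff plus union bound.''
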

A direct corollary of \Cref{theorem: asymptotic optimality} is that the regret of any ranker $\gR$ can be written as
$
    \text{Reg}_{\gR}(T) 
    \leq  
    \gO\brk*{v^{\gR_q}_1 - v^{\gR}_1 + \frac{\abs{\gD}Mb_{\max}}{\alpha_{\min}}\log(\frac{\abs{\gD}}{\delta})}.
$
This motivates the analysis of $v^{\gR_q}_1 - v^{\gR}_1$ (the regret w.r.t. $\gR_q$), as both are equivalent for large enough $T$.

\subsection{Identifiability Through Variability}
\label{sec:variability}

We showed in \Cref{sec: lower bound} that popularity and quality biases cannot be disentangled in the general case. We now show that, under a condition of sufficient variability induced by the user population, these two factors can be decoupled, a fact we exploit below to design a ranking policy with sublinear regret.

Let $\bs{\Sigma}_{\bs{q}\bs{p}}(u, \bs{s}) = \bs{x}_{\bs{q}}(u, \bs{s})\bs{x}_{\bs{p}}^T(u, \bs{s})$ denote the correlation matrix of $\bs{x}_{\bs{q}}$ and $\bs{x}_{\bs{p}}$. Similarly, we use the notations $\bs{\Sigma}_{\bs{q}\bs{q}}, \bs{\Sigma}_{\bs{p}\bs{p}}$. That is,  $\expect*{u}{\bs{\Sigma}_{\bs{q}\bs{p}}(u, \bs{s})} = \expect*{u}{\bs{x}_{\bs{q}}(u, \bs{s})\bs{x}_{\bs{p}}^T(u, \bs{s})}$ captures the correlation of features in $\qualitybias$ and $\bs{b}$, in expectation over the user population. We make the following assumption to ensure identifiability of quality:
\begin{assumption}
\label{assumption: lambda min}
    For all $t \leq T$ and $\bs{s}_t = \bs{s}_t(u_t)$, there exists $\rho \in (0, 1)$, such that 
    \begin{align*}
        \expect*{t-1}{
        \begin{pmatrix}
            \rho \bs{\Sigma}_{\bs{q}\bs{q}}(u_t, \bs{s}_t) & \bs{\Sigma}_{\bs{q}\bs{p}}(u_t, \bs{s}_t) \\ \bs{\Sigma}_{\bs{p}\bs{q}}(u_t, \bs{s}_t) & \bs{\Sigma}_{\bs{p}\bs{p}}(u_t, \bs{s}_t)
        \end{pmatrix}
        }
        \succeq
        0.
    \end{align*}
    Let $\rho_{\min} \in (0,1)$ be the smallest such $\rho$.
\end{assumption}
Similar assumptions have been made in contextual bandit models \citep{kannan2018smoothed,chatterji2020osom,bastani2021mostly,papini2021leveraging}. Nevertheless, \Cref{assumption: lambda min} is, in fact, less demanding than these previous assumptions, requiring only positive-semi-definiteness (see \Cref{appendix: lambda min assump} for explicit comparison to other assumptions). Also, notice that \Cref{assumption: lambda min} holds trivially for $\rho = 1$, yet we require it to hold for some $\rho < 1$. The assumption suffices to disentangle quality from popularity by ensuring enough variability exists in the popularity features. Importantly, this assumption enables the design of an efficient algorithm which accounts for popularity bias, as we show next.


\begin{algorithm}[t!]
\caption{QP Ranker: $\gR_{qp}$}
\label{alg: qrp-ucb}
\begin{algorithmic}[1]
\STATE{ \textbf{require:} $\delta \in (0,1), \lambda > 0$} 
\STATE{ \textbf{initialize:} $\gH \gets \emptyset, \bs{V} \gets \lambda \bs{I}, \tau_{\min} = \frac{8M b_{\max}}{\alpha_{\min}}\log(\frac{\abs{\gD}}{\delta})$ }
\FOR{$t = 1, 2, \hdots$}
    \STATE Observe user $u_t \sim P_{\gU}$
    \STATE $\hat{\bs{\psi}}^{\text{ML}} \in \arg\max_{\bs{\psi}}\gL_\lambda(\bs{\psi}| \gH)$
    \STATE $\hat{\bs{\psi}} \in \arg\min_{\bs{\psi}} \norm{g\brk*{\bs{\psi} | \gH} - g(\hat{\bs{\psi}}^{\text{ML}} | \gH)}_{\bs{V}^{-1}}$
    \STATE ${\bs{s}_t \in \arg\max_{\bs{s} \in \gS} \sum_{i=1}^M
    \brk[s]*{\qualitybias_{\hat{\bs{\theta}}}(u_t, \bs{s})}_i z_i\brk*{\disp_{\hat{\bs{\theta}},\hat{\bs{\phi}}}(u_t, \bs{s})} 
    + \epsilon_{t,\delta}(\bs{s})}$
    \STATE Select slate $\bs{s}_t$ and observe selected item $c_t$
    \IF {$\abs{\brk[c]*{I \in \bs{s}_t : n_t(I) \geq \tau_{\min}}} = M$}
        \STATE $\gH \gets \gH \cup (u_t, \bs{s}_t, c_t)$
        \STATE $\bs{V} \gets \bs{V} + \bs{\Sigma}_x(u_t, \bs{s}_t)$
    \ENDIF
\ENDFOR
\end{algorithmic}
\end{algorithm}

\begin{figure*}[t!]
    \centering
    \begin{subfigure}[b]{0.35\textwidth}
    \centering
    \includegraphics[clip, trim=0cm 0 0 0, width=\linewidth]{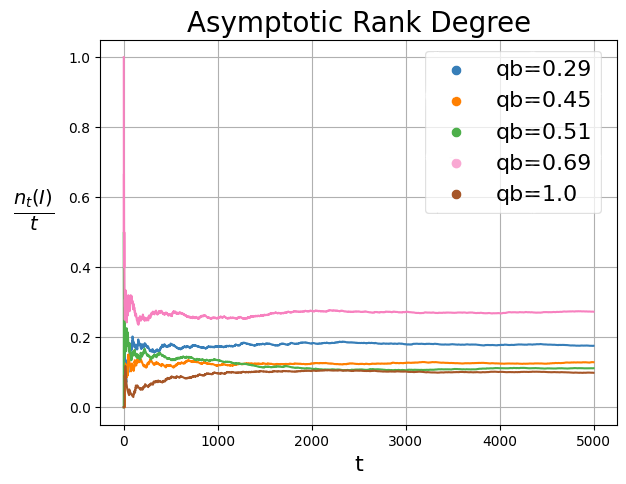}
    \captionsetup{margin={1cm,0cm}} 
    \caption{Unlucky Draw}
    \end{subfigure}
    \begin{subfigure}[b]{0.35\textwidth}
    \centering
    \includegraphics[clip, trim=0cm 0 0 0, width=\linewidth]{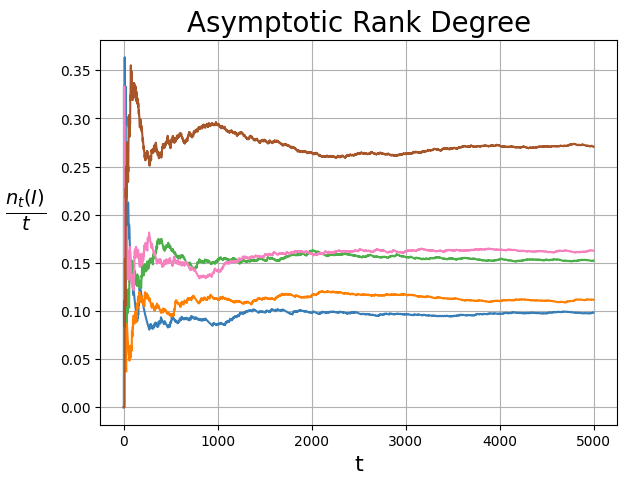}
    \captionsetup{margin={1cm,0cm}} 
    \caption{Lucky Draw}
    \end{subfigure}
    \hspace{3em}
    \caption{The empirical selection probabilities of different items (of varying qualities, indicated by color) by a popularity-driven ranker as a function of time. (a) An ``unlucky'' draw where unfavorable effects of the popularity feedback loop are evident: low-quality items $I$ move into higher positions on the slate and remain there, and hence have high asymptotic selection probability $\lim_{t\to\infty} n_t(I)/t$. (b) A ``lucky'' draw in which high-quality items move to the top of the slate and hence have high asymptotic selection probability.}
    \label{fig:powerlaws}
\end{figure*}

\subsection{The QP Ranker} 
\label{section: QP ranker}

We turn to the specification and analysis of our key algorithm, the \emph{Quality-Popularity (QP) Ranker} $\gR_{qp}$, presented in \Cref{alg: qrp-ucb}. Our approach exploits the identifiability condition in \Cref{assumption: lambda min} to disentangle popularity from quality.

For convenience, we let $\bs{\psi} \in \R^{d_q + d_p}$ be the concatenation $\bs{\psi}=\brk*{\bs{\theta}_1^T, \hdots, \bs{\theta}_M^T, \bs{\phi}_1^T, \hdots, \bs{\phi}_M^T}^T$ of the quality and popularity bias parameters, and $\bs{x}(u,\bs{s}) \in \R^{d_q + d_p}$ the concatenation $\bs{x}=\brk*{\bs{x_q}^T, \bs{x}^T_p}^T$ of the quality and popularity embeddings. This notation admits the following shorthand: $\qualitybias(u,\bs{s}) + \bs{b}(u, \bs{s}) = \bs{x}^T(u, \bs{s})\bs{\psi}$.

We begin by defining a projected, regularized likelihood estimator which we use in \Cref{alg: qrp-ucb}. With the convention that $I_{0, k} = \emptyset$ for all $k \leq T$, we define the regularized log-likelihood with regularization parameter $\lambda > 0$ by:
\begin{align}
\label{eq: likelihood}
    \gL_\lambda\brk*{\bs{\psi}| \gH_t}
    \! = \!\!\!\!\!\!\!\!
    \mathop{\sum_{u_k, \bs{s}_k, c_k \in \gH_t}}_{1\leq i\leq M} \!\!\!\!\!\!\!\!
    \indicator{c_k = I_{i,k}}\!
    \log\brk*{\!z_i\brk*{\!\disp_{\bs{\theta}, \bs{\phi}}(u_k,\! s_k)}}
   \! - \!\!
    \frac{\lambda}{2}\norm{\bs{\psi}}_2^2,
\end{align}
where
$
    \disp_{\bs{\theta}, \bs{\phi}}(u_k, s_k)
    =
    \qualitybias_{\bs{\theta}}(u_k, \bs{s}_k) \! +\! \popularitybias_{\bs{\phi}}(u_k, \bs{s}_k) \! +\! \rankbias.
$

We then define the maximum likelihood estimator (MLE) by 
\begin{align*}
\hat{\bs{\psi}}_t^{\text{ML}} \in \arg\max_{\bs{\psi}}\gL_\lambda\brk*{\bs{\psi}| \gH_t}.
\end{align*}

Next, to ensure efficient exploration, we require the estimation of confidence bounds for the quality and popularity bias parameters. For this, we let $\bs{V}_t = \lambda \bs{I} + \sum_{u_k, \bs{s}_k, c_k \in \gH_t} \bs{\Sigma}_x(u_k, \bs{s}_k)$ be the design matrix, and define
\begin{align*}
    \bs{g}\brk*{\bs{\psi}| \gH_t}
    =
    \bs{\psi}
    +
    \sum_{u_k, \bs{s}_k, c_k \in \gH_t}
    \bs{z}\brk*{\disp_{\bs{\theta}, \bs{\phi}}(u_k, s_k)}
    \otimes
    \bs{x}(u_k, \bs{s}_k).
\end{align*}
The function $\bs{g}$ is a key quantity related to the gradient of $\gL_\lambda$, which allows us to achieve tight confidence guarantees for $\bs{\psi}$. We define the projected MLE at time $t$ by
\begin{align*}
    \hat{\bs{\psi}} \in \arg\min_{\bs{\psi}} \norm{g\brk*{\bs{\psi}| \gH_t} - g\brk*{\hat{\bs{\psi}}_t^{\text{ML}| \gH_t}}}_{\bs{V}_t^{-1}}.
\end{align*}
We note that, though tighter estimators exist (see e.g., \citet{amani2021ucb}), the above estimator is sufficient for efficient regret guarantees.

We are now ready to define the QP-Ranker ($\gR_{qp}$) in \Cref{alg: qrp-ucb}. The QP-Ranker proceeds in discrete steps. When a user is sampled at time~$t$, $\gR_{qp}$ uses the current MLE $\hat{\bs{\psi}}$ to construct its slate, using an additive bonus $\epsilon_{t,\delta}(\bs{s})$. The design matrix $\bs{V}$ is updated only if the popularities of all items in the slate have reached saturation. Let $\epsilon_{t,\delta}(\bs{s}) = \brk*{4\sqrt{M} + \frac{1}{\sqrt{1-\rho_{min}}}}\gamma_t(\delta)
        \sqrt{
        \expect*{t-1}{
            \norm{
                \bs{x}(u_t, \bs{s})
            }_{\bs{V}_t^{-1}}^2
        }}$, where $\gamma_t(\delta) = 4e^4M^2\brk*{\sqrt{\lambda M}L + 2 \sqrt{\log\brk*{\frac{1}{\delta}} + Md\log\brk*{1+\frac{t}{\lambda d}}}}$, and we used the notation $d = d_q + d_p$ and $L = L_q + L_p$. We have the following result (its proof is given in \Cref{appendix: regret proof}).

\begin{theorem}
\label{thm: regret upper bound}
    Let $\delta \in (0,1)$, and $\lambda =\frac{1}{\sqrt{L}}$. Then with probability at least $1-\delta$, the regret of the QP Ranker $\gR_{qp}$ (\Cref{alg: qrp-ucb}) is upper bounded by
    \begin{align*}
        \text{Reg}_{\gR_{qp}}(T) \leq 
        \gO\brk*{M^{2.5}
            d\sqrt{T}
            \brk*{\sqrt{M} + \frac{1}{\sqrt{1-\rho_{min}}}}
             \log\brk*{1 + \frac{TL}{d}}
             \sqrt{\log\brk*{\frac{1}{\delta}}}
            }.
    \end{align*}
\end{theorem}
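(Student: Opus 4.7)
The plan is to reduce regret against $\gR^*$ to regret against the quality ranker $\gR_q$ by invoking \Cref{theorem: asymptotic optimality}, which absorbs the non-recurrent transient into an $\gO(|\gD|Mb_{\max}/\alpha_{\min})$ additive term, and then to control $v_1^{\gR_q}-v_1^{\gR_{qp}}$ using a standard optimism-plus-concentration argument tailored to the multinomial-logistic choice model with the variability condition of \Cref{assumption: lambda min}. Concretely, I would first rewrite the per-step regret against $\gR_q$ as the difference in $\sum_i \qualitybias(u_t,\bs{s})_i \, z_i\brk*{\disp(u_t,\bs{s})}$ between the slate chosen by $\gR_q$ and the slate chosen by $\gR_{qp}$, so that optimism w.r.t.\ the bonus $\epsilon_{t,\delta}$ makes the difference non-positive up to the plug-in error of $(\hat{\bs\theta},\hat{\bs\phi})$.

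Second, I would build the confidence set for the projected MLE. Following the analysis of multinomial-logistic/GLM bandits of \citet{faury2020improved,abeille2021instance,amani2021ucb}, the key self-normalized bound takes the form $\|\bs g(\hat{\bs\psi}^{\mathrm{ML}}|\gH_t) - \bs g(\bs\psi^*|\gH_t)\|_{\bs V_t^{-1}} \leq \gamma_t(\delta)/C$ with high probability, and the projection step ensures $\|\bs g(\hat{\bs\psi}|\gH_t) - \bs g(\bs\psi^*|\gH_t)\|_{\bs V_t^{-1}} \leq 2\gamma_t(\delta)/C$. The saturation gate in \Cref{alg: qrp-ucb} guarantees that every sample used in $\bs V_t$ has already reached popularity saturation; hence $\popularitybias_k \equiv \bs b$, and the Hessian of $\gL_\lambda$ is the softmax Fisher information whose eigenvalues can be lower bounded on the bounded-disposition region by a constant $\sim e^{-\gO(M)}$, so that a standard self-concordance-style inversion yields $\|\hat{\bs\psi}-\bs\psi^*\|_{\bs V_t} \leq \gamma_t(\delta)$. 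This is where the $e^4 M^2$ constant inside $\gamma_t$ appears, and the $\log\brk*{1+tL/d}$ factor comes from the determinant-trace inequality on $\bs V_t$.

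The heart of the argument, and the main obstacle, is converting the joint confidence bound on $\bs\psi$ into usable bounds on $\qualitybias_{\hat{\bs\theta}}-\qualitybias_{\bs\theta^*}$ alone, since the Gram matrix $\bs V_t$ mixes $\bs x_q$ and $\bs x_p$. Here I would use \Cref{assumption: lambda min}: writing the block form of $\bs V_t$, the Schur complement of the $(q,q)$ block is $\succeq (1-\rho_{\min})\sum_k \bs\Sigma_{qq}(u_k,\bs s_k)$, up to regularization. This provides the separation $\|\bs x_q^T(\hat{\bs\theta}-\bs\theta^*)\| \leq (1-\rho_{\min})^{-1/2} \|\bs x\|_{\bs V_t^{-1}}\gamma_t(\delta)$, which explains the $1/\sqrt{1-\rho_{\min}}$ factor in $\epsilon_{t,\delta}$; the additive $4\sqrt{M}$ accounts for the direct bound on $\bs x^T(\hat{\bs\psi}-\bs\psi^*)$ that appears when bounding perturbations of the softmax weights $z_i$ (via $|z_i(u)-z_i(u')|\leq \|u-u'\|$ and an extra $\sqrt{M}$ from $M$ items).

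Finally, I would aggregate. A standard Lipschitz/Taylor expansion of $\sum_i \qualitybias_i z_i(\disp)$ around $(\bs\theta^*,\bs\phi^*)$ bounds the plug-in error by $\epsilon_{t,\delta}(\bs s)$ modulo constants depending on $M$; optimism then gives per-step regret $\leq 2\epsilon_{t,\delta}(\bs s_t)$. Jensen, together with the elliptic potential lemma $\sum_{t\leq T}\min\brk[c]*{1,\|\bs x(u_t,\bs s_t)\|_{\bs V_t^{-1}}^2} \leq 2d\log\brk*{1+TL/d}$, yields
\[
\sum_{t=1}^T \epsilon_{t,\delta}(\bs s_t)
\;\lesssim\;
\brk*{\sqrt{M}+\tfrac{1}{\sqrt{1-\rho_{\min}}}}\gamma_T(\delta)\sqrt{T\cdot d\log\brk*{1+TL/d}},
\]
and plugging in $\gamma_T(\delta) = \gO(M^2\sqrt{d\log(T/\delta)})$ recovers the claimed $M^{2.5}d\sqrt{T}\,(\sqrt{M}+(1-\rho_{\min})^{-1/2})\log(1+TL/d)\sqrt{\log(1/\delta)}$ bound. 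The number of pre-saturation rounds during which $\bs V_t$ is not updated is bounded by the Chernoff argument used in \Cref{theorem: asymptotic optimality}, contributing only an additive constant absorbed in the $\gO$. The main obstacle throughout is the separation step, because the naive joint bound would only yield $\sqrt{M}$ scaling without the $\rho_{\min}$-dependent amplification, and simultaneously losing identifiability of $\bs\theta^*$ against confounding popularity features; the PSD form of \Cref{assumption: lambda min} is exactly what rescues this via the Schur complement.
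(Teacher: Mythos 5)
Your proposal mirrors the paper's proof in all essential respects: it uses the same reduction to the saturated/quality-ranker benchmark with the pre-saturation rounds counted as in \Cref{theorem: asymptotic optimality}, the same optimism argument with the bonus $\epsilon_{t,\delta}$, the same projected-MLE confidence sets in the style of \citet{amani2021ucb} (yielding $\gamma_t(\delta)\sim M^2\sqrt{\cdot}$ and the $4\sqrt{M}$ softmax-perturbation term), and the same elliptic-potential aggregation. The only difference is cosmetic: where you extract the quality-block error via a Schur-complement bound, the paper instead sandwiches the coordinate projection, showing $\mathbb{E}_{t-1}\bigl[\bs{D}_{d_q,d_p}\bs{V}_t\bs{D}_{d_q,d_p}\bigr]\preceq\tfrac{1}{1-\rho_{\min}}\bs{V}_t$ directly from \Cref{assumption: lambda min} and applying Cauchy--Schwarz, which is an equivalent use of the same block-PSD condition.
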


\Cref{thm: regret upper bound} shows sublinear regret is achievable with effective exploration and proper accounting of popularity bias. Still, we note that \Cref{assumption: lambda min} is needed to ensure user utility (quality) can be disentangled from popularity, per our lower bound in \Cref{theorem: lower bound}. Particularly, we see the effect of $\rho_{\min}$ on overall regret, which acts as a hardness parameter for the problem.  We note too that the $\gO$-notation in \Cref{thm: regret upper bound} hides the constant of \Cref{theorem: asymptotic optimality}, which may be large, but does not depend on $T$. Future research should address this using, say, further assumptions on the dynamics of popularity.

The QP Ranker also performs effectively in simulation, as we demonstrate in the following section. We also conducted several additional studies of the algorithm, including varying problem parameters and assessing instances where \Cref{assumption: lambda min} does not hold. We refer the reader to \Cref{appendix: experiments} for these experiments. 

\begin{figure*}[t!]
    \centering
    \begin{subfigure}[b]{0.35\textwidth}
    \centering
    \includegraphics[width=\linewidth]{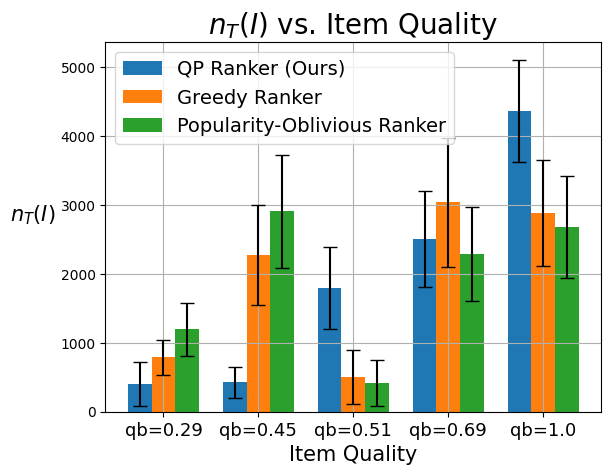}
    \end{subfigure}
    \hspace*{1cm}
    \begin{subfigure}[b]{0.35\textwidth}
    \centering
    \includegraphics[width=\linewidth]{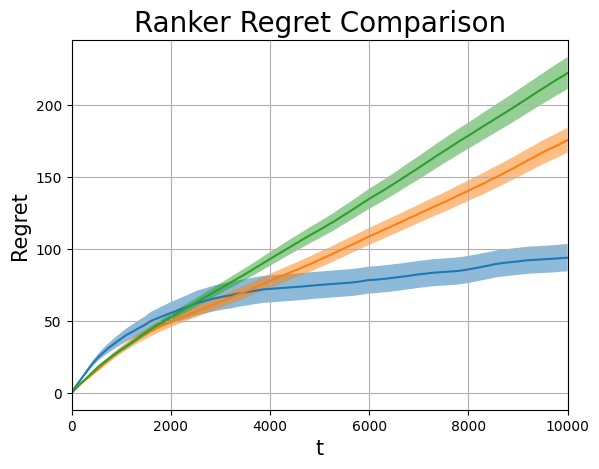}
    \end{subfigure}
    \caption{Comparison of the QP, greedy and popularity-oblivious rankers on popularity-biased users. Left: number of selections per item (items of different qualities). Right: overall regret. We see that not accounting for user popularity bias increases overall regret significantly; by contrast our QP ranker has low regret and tends to rank high-quality items higher (inducing more high-quality selection).}
    \label{fig: regret biased users}
\end{figure*}

\section{Simulation Studies}
\label{sec:experiments}

In this section we conduct several simulation studies to demonstrate the efficiency of the QP-Ranker, and emphasize the drawbacks of (i) greedy rankers, which do not explore efficiently, (ii) popularity biased rankers (see \Cref{section: pop-biased rankers}), and (iii) popularity-oblivious rankers, which ignore users' popularity bias. For this, we adopt a simple realization of our framework. We assume $N$ items, with user-item embeddings $\bs{x}(u, I)$ and parameters $\bs{\theta}^*$ sampled uniformly on $[0, 1/\sqrt{d}]^d$. This gives a slate embedding $\bs{x}(u, \bs{s}) \in \R^{Md}$.

We start by considering the popularity-driven ranker, which only takes into account an item's popularity when ranking. Concretely, at any time $t$, it generates a slate $\bs{s}_t$ whose $i^{\text{th}}$ item $I_{i,t}$ has maximal count $n_t(I)$, excluding those items $I_{j,t}$ where $j<i$. Formally, let $I_{i,t}$ be the item at position $i$ in slate $\bs{s}_t$. Then,
\begin{align*}
    I_{i,t} \in \arg\max_{I \in \gD, I \notin \brk[c]*{I_{j,t}}_{j=1}^{i-1}} n_t(I).
    \tag{popularity-driven Ranker}
\end{align*} 

\Cref{fig:powerlaws} depicts the dynamics of the popularity-driven ranker. While user dispositions (hence selections) are positively correlated with quality, due to rank/position bias, the ranker gradually \emph{amplifies} the positions of higher-ranked items given their biased selections. Of course, the ranker may get lucky---if higher quality items are selected more often in early stages of the process, the ranker may converge to high-utility slates. However, \Cref{fig:powerlaws} shows that this process may also converge to highly suboptimal rankings, due to the feedback loop between user rank bias in selection and the ranker's own popularity bias. This, in turn, induces linear regret and low user welfare. We refer the reader to \Cref{appendix: dynamics of popularity biased rankers} for theoretical analysis of these dynamics.

Next, we analyze and compare greedy (i.e., non-exploratory) as well as popularity-oblivious rankers in the setting of popularity-biased users. Specifically we consider two baseline rankers. The first ``greedy'' ranker takes into account user popularity, but does not attempt to explore. We include this ranker to emphasize the need to efficiently explore in order to minimize regret and maximize overall user welfare. Particularly, the greedy ranker is defined by \Cref{alg: qrp-ucb} with the exploration coefficient $\epsilon$ set to zero. The second we consider is the popularity-oblivious ranker. In this case, popularity bias is completely ignored, and the ranker attempts to efficiently explore using a misspecified user selection model. Particularly, the oblivious ranker is defined by \Cref{alg: qrp-ucb} with $\popularitybias$ set to zero.

\Cref{fig: regret biased users} compares the QP ranker, the popularity-oblivious ranker and the greedy ranker in terms of regret and item selections. It is evident that the oblivious ranker has significantly worse regret than our QP ranker due to misspecification, and the greedy ranker has worse regret due to its failure to explore efficiently. \Cref{fig: regret biased users} also  shows the number of selections of each item at the end of training (each is labeled with its quality). The popularity-oblivious ranker tends to rank low quality items higher, inducing more selections of such items via the dynamics of amplification.

Finally, we note that the popularity-oblivious ranker may actually perform reasonably well in the case where user utility is dependent on both item quality \emph{and} popularity, e.g., if $\utility \equiv \qualitybias + c \cdot \popularitybias$ for some $c > 0$. Such a case might arise when a user's utility is dictated not only by inherent item quality, but also by positive ``network effects'' that are correlated with popularity (e.g., the number of friends with whom the user can discuss a recent movie). That said, the oblivious ranker is still not efficient w.r.t.\ this combined utility, since disambiguation of popularity and quality and explicit exploration are still required.



\section{Related Work}
\label{sec:related}

Our work intersects with various lines of research, including the investigation of popularity bias in RSs, contextual bandits, and power-law distributions in RSs.

\paragraph{Popularity Bias in RSs.}
Popularity bias in recommender systems has garnered significant focus in the research community. Prior studies have examined the trade-off between item popularity and recommendation accuracy \citep{steck2011item}, as well as the significance of long-tail items in elevating user satisfaction and averting monopoly by dominant brands \citep{park2008long}. In a more encompassing manner, work by \citet{jannach2015recommenders} empirically exhibited the differing susceptibilities of various recommendation algorithms to popularity bias. This phenomenon has been thoroughly studied in the literature, with multiple approaches to define, evaluate, and counteract this bias \citep{janssen2010rank,abdollahpouri2019popularity,abdollahpouri2019unfairness,elahi2021investigating,abdollahpouri2021user}. Some of the approaches proposed to alleviate the detrimental effects of this bias indclude pre-processing of the training data \citep{jannach2015recommenders,bellogin2017statistical}, model-based and re-ranking methodologies \citep{abdollahpouri2020popularity}, the inclusion of novelty scores \citep{bedi2014using}, and the use of Variational Autoencoders \citep{borges2021mitigating}. Calibrated Popularity \citep{abdollahpouri2021user} and time-aware recommender systems \citep{campos2014time,harshvardhan2022ubmtr} provide user-centric and temporal insights, respectively. Recent work has also delved into human-in-the-loop (HitL) bias in conversational RSs \citep{fu2021hoops,fu2021popcorn}. While much of these works provide strong empirical evidence for the existence and possible mitigation strategies of popoularity bias in RSs, our work provides the first theoretical lower and upper bounds showing impossibility results for identifiability of quality from popularity, and providing a provably efficient ranking algorithm from maximizing user-welfare under popularity bias, with strong regret guarantees.

\paragraph{Contextual Bandits.} Our work is related to the growing literature of generalized linear bandits \citep{filippi2010parametric}, where rewards are sampled from a logistic function with linear features \citep{abeille2021instance,amani2021ucb}. This also includes the expansion of the logistic formulation to the $M$-dimensional case through multinomial bandits \citep{amani2021ucb}. By extending these models, we account for dynamic popularity bias and user utility, while also proposing a novel user-choice model, inspired by practical concurrent learning models \citep{tennenholtzreinforcement,tennenholtz2023reinforcement}. This brings to light the challenges of confounding effects in bandits, where popularity bias may act as a confounder for item quality or user utility \citep{bareinboim2015bandits,krishnamurthy2018semiparametric,tennenholtz2021bandits}.

\paragraph{Power Laws and Preferential Attachment.}
Connections between our model and classical mechanisms proposed for the emergence of power laws are evident. For instance, under certain assumptions, our model aligns with the principles of preferential attachment that results in Yule-Simon size distributions \citep{mitzenmacher2004brief}. Additionally, other configurations of our model resemble the prestige ranking model \citep{janssen2010rank}, emphasizing the selection distribution that leads to the manifestation of power laws.

\section{Conclusion}

We proposed theoretically grounded methods and regret analysis of popularity bias in RSs. Our work provides insights into the impact of popularity bias on user welfare and proposes methods to mitigate negative welfare impact. Our results strongly suggest that popularity bias can negatively impact user welfare, especially in naive RSs that use popularity as a proxy for quality. We show that exploration can help mitigate these negative effects, leading to better long-term user utility (and sub-linear rather than linear regret). Moreover, our work highlights the importance of a deeper understanding of the \emph{mechanisms} underlying popularity bias in ranking-based systems. Given the significant role these play in shaping user opinions and decisions, this understanding is crucial for ensuring fairness, diversity and user well-being. Our work contributes to this understanding and should have implications for the design and deployment of ranking systems.

Our framework has several limitations. Primarily, our analysis assumes that user utility depends solely on selected item quality; however, other factors (e.g., social, cultural) may shape utility. We also note that popularity bias is prevalent in other ranking-based systems (e.g., search engines, social media platforms, news aggregators). Our findings should have implications in these cases, but further research is needed to determine their generalizability.

\bibliography{bibliography}
\bibliographystyle{plainnat}

\newpage
\onecolumn
\appendix

\section{Dynamics of the Popularity-Driven Ranker}
\label{appendix: dynamics of popularity biased rankers}

\begin{figure}[t!]
\centering
\includegraphics[width=0.8\linewidth]{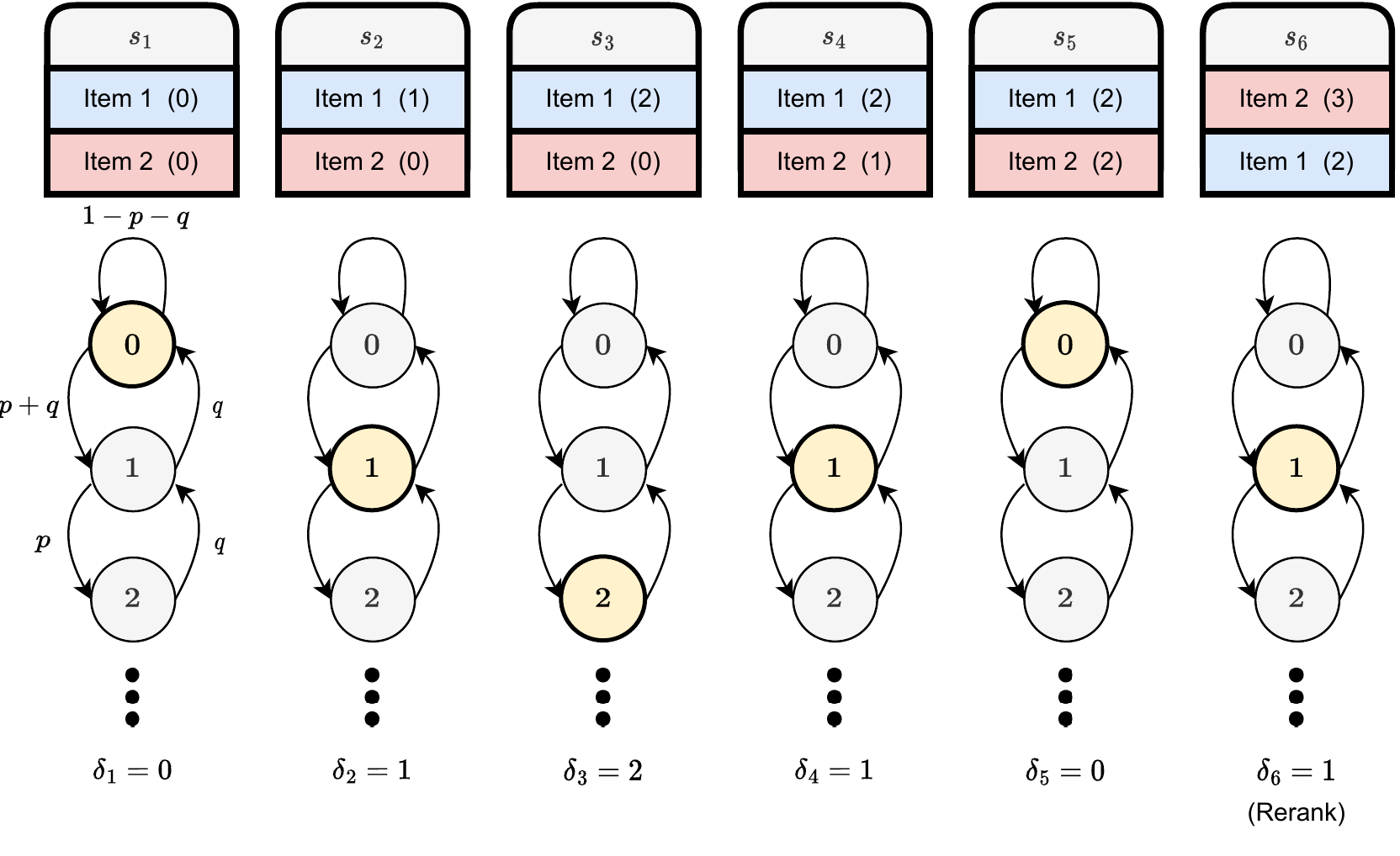}
\caption{\small A visualization of the position-biased ranking process for the case of two items. The number of selections on each item is shown in parenthesis. A rerank event occurs whenever the random walk $\delta_t$ hits the reflective barrier and exits it through a selection on the second position.}
\label{fig: reranking}
\end{figure}

We consider popularity bias that is directly incorporated by suboptimal rankers. For clarity, we focus on the simple case in which users are only biased by position rank (i.e., unbiased by quality or popularity). That is, we assume
$
    \qualitybias \equiv 0, \popularitybias \equiv 0
$,
such that a user selects position $1 \leq i \leq M$ w.p. $p_i := z_i\brk*{\rankbias}$, where $p_1 \geq p_2 \geq \hdots \geq p_M$. 

To this end, we wish to anaylze the position dynamics that could emerge, as a ranker accumulates its own popularity bias, and show it may eventually induce a power law distribution over items. For this, we consider a suboptimal, popularity-driven ranker, $\gR^{\text{pop}}$, which ranks items based on the number of times they were selected. Specifically, the ranker selects the top $M$-selected items and ranks them in order of number of selections. Let $S^{\text{pop}}_t$ denote the slate induced by the popularity ranker, and let $I_{i,t}$ be the item in position $i$. Then,
\begin{align*}
    I_{i,t} \in \arg\max_{I \in \gD, I \notin \brk[c]*{I_{j,t}}_{j=1}^{i-1}} n_t(I).
    \tag{popularity-driven Ranker}
\end{align*}
For consistency, we assume the ranker break ties by not reranking any two items if they have the same number of selections. We also assume the initial slate of items is chosen uniformly from the corpus. 

The above implies a Markov process, $(\bs{n}_t)_{t\in\mathbb{N}}$, whose state is defined by the number of selections. In the rest of this section, we will analyze this process to better understand its behavior and the possible emergence of power law distributions. Let $n_t(I_{r,t})$ be the rank degree -- a random variable which indicates the number of selections on the item at position $r$ in the ranking at time $t$. The rank-size distribution $\pi_\infty(r)$ is defined as the fraction of selections of the item at position $r$ at the limit, i.e., $\pi_\infty(r) = \lim\sup_{t\rightarrow \infty} \frac{1}{t}\expect*{}{n_t(I_{r,t})}$. Indeed, the rank-size distribution $\pi_\infty$ is a distribution over rank positions. In what follows, we will show this distribution is monotonically decreasing with rank, whenever users are more inclined to select higher ranked items.

Consider the case of two items, $\gD = \brk[c]*{I_1, I_2}$. We denote the probability of selecting positions $1$ and $2$ by $p$ and $q$, respectively. Notice that 
\begin{align}
\label{eq: rank degree decomposition}
n_t(I_{r,t})
= 
\sum_{k=1}^{t-1} \indicator{\text{position $r$ was selected at time $k$}} + \sum_{k=1}^{t-1} \indicator{\text{rerank occured at time $k$}}.
\end{align}
This comes from the fact that the item at position 1 and the item at position 2 can only swap positions if they both have the same size and the item in position 2 was selected. This will lead to the item at position 2 having a larger size and moving to position 1, hence the size at position 1 will increase by 1 even though position 2 was selected. 

By \Cref{eq: rank degree decomposition}, the rank degree is composed of the number of selections on position $r$, and the total number of reranks. Notice that the expected number of reranks at position $r$ at time $t$ is $p_r t$. To calculate the number of reranks, we denote the random process $\delta_t = n_t(I_{1,t}) - n_t(I_{2,t})$. This quantity evolves according to the following transition kernel 
\begin{align*}
    \delta_{t+1}
    =
    \indicator{\delta_t > 0}
    \begin{cases}
        \delta_t + 1  &,\text{w.p. } p \\
        \delta_t - 1 &, \text{w.p. } q \\
        \delta_t &, \text{w.p. } 1-p-q
    \end{cases}
    +
    \indicator{\delta_t = 0}
    \begin{cases}
        1  &,\text{w.p. } p + q \\
        0 &, \text{w.p. } 1-p-q
    \end{cases}.
\end{align*}

Notice that $\brk*{\delta_t}_{t \in \sN}$ is a homogeneous random walk on the one-dimensional line, with a reflective barrier at $0$. A reranking event occurs when, starting at $\delta_1 = 1$, the random walk hits the reflective barrier and then exits it through a selection on the second position (see \Cref{fig: reranking}). That is, 
\begin{align}
\label{eq: rerank probability}
P(\text{rerank at time $k > t$} | \delta_t = 1) 
= 
\underbrace{P(\exists k > t, \delta_k = 0 | \delta_t = 1)}_{\text{hit reflective barrier if stated at $\delta_t = 1$}} \cdot \underbrace{q}_{\text{selection on second position}}.
\end{align}
Once a rerank occurs, the random walk restarts at $1$ and proceeds identically. 

It remains to characterize $P(\exists k > t, \delta_k = 0 | \delta_t = 1)$ in \Cref{eq: rerank probability}. From the Gambler's Ruin analysis \citep{epstein2012theory}, we have that
\begin{align*}
    P(\exists k > t, \delta_k = 0 | \delta_t = 1)
    &=
    \lim\limits_{N \to \infty}
    P(\text{$\delta_t$ hits $0$ before $N$} | \delta_0 = 1) \\
    &=
    \lim\limits_{N \to \infty}
    \begin{cases}
        \frac{\left(\frac{q}{p}\right) - \left(\frac{q}{p}\right)^N}{1 - \left(\frac{q}{p}\right)^N}
        &, p \neq q \\
        1-\frac{1}{N} &, p = q
    \end{cases}
    =
    \begin{cases}
        \frac{q}{p} &, p > q \\
        1 &, p = q
    \end{cases},
\end{align*}
where in the last equality we used the fact that $p \geq q$. It then follows from \Cref{eq: rerank probability} that the expected number of reranks at time $t$ is upper bounded by $\frac{1}{1-\frac{q^2}{p}} = \frac{p}{p-q^2}$, for $p > q$. Finally, using \Cref{eq: rank degree decomposition}, we conclude that, for $p > q$
\begin{align*}
\pi_\infty^1 
= 
\lim_{t\to\infty} \frac{1}{t} \expect*{}{\text{number of selections on p. 1 }+\text{ number of reranks}} 
=  
\lim_{t\to\infty} \frac{1}{t}\left( p\cdot t + \text{Const}\right) = p.
\end{align*}
Similarly, $\pi_\infty^2 = q$. This, in turn, means that the dynamic ranking process would eventually freeze on a particular ranking for which one item will always be ranked above another. In other words, as long as the number of reranks is finite, the process would eventually freeze. By symmetry of the process, the ranking will freeze on any slate $\bs{s} \in \gS$ and any order of items with equal probability. Notably, when $p = q$ the number of reranks is infinite, and $\pi_\infty^1, \pi_\infty^2$ diverge, as both items are reranked infinitely often.

\newpage
\section{Discussion}
\subsection{The Case of No Popularity Bias}
\label{appendix: reduction to mnl}

When no popularity bias exists, our model reduces to the multinomial model of \citet{amani2021ucb}. We note that \citet{amani2021ucb} uses tighter confidence sets for the parameters due to exponential dependence on coefficients relating to the gradient of the softmax function $\bs{z}$. For clarity and simplicity, we chose to bound $\qualitybias, \popularitybias, \rankbias$ in the interval $[0,1]$ as to avoid this dependence. Nevertheless, our results easily extend to general intervals, for which case one can apply the tighter parameter estimation guarantees of \citet{amani2021ucb} to reduce the exponential dependence (yet not eliminate it). While the proofs and derivations of our results would not change, a different estimator would need to be used, projecting to a more involved set.

\subsection{\Cref{assumption: lambda min}}
\label{appendix: lambda min assump}

To better understand \Cref{assumption: lambda min} we first note the limit case of $\rho = 1$. While the assumptions requires $\rho < 1$ (as also evident by our regret bound in \Cref{thm: regret upper bound}), the assumption always holds for $\rho = 1$, since it is reduced to
$
        \expect*{t-1}{
        \begin{pmatrix}
            \bs{\Sigma}_{\bs{q}\bs{q}}(u_t, \bs{s}_t) & \bs{\Sigma}_{\bs{q}\bs{p}}(u_t, \bs{s}_t) \\ \bs{\Sigma}_{\bs{p}\bs{q}}(u_t, \bs{s}_t) & \bs{\Sigma}_{\bs{p}\bs{p}}(u_t, \bs{s}_t)
        \end{pmatrix}
        }
        \succeq
        0,
$
which holds by definition. It follows that a sufficient condition for \Cref{assumption: lambda min} to hold for some $\rho < 1$ is:
\begin{align}
        \expect*{t-1}{
        \begin{pmatrix}
            \bs{\Sigma}_{\bs{q}\bs{q}}(u_t, \bs{s}_t) & \bs{\Sigma}_{\bs{q}\bs{p}}(u_t, \bs{s}_t) \\ \bs{\Sigma}_{\bs{p}\bs{q}}(u_t, \bs{s}_t) & \bs{\Sigma}_{\bs{p}\bs{p}}(u_t, \bs{s}_t)
        \end{pmatrix}
        }
        \succ
        0.
        \label{eq: sufficient assumption}
\end{align}
The assumption in \Cref{eq: sufficient assumption} has been used in previous work on contextual bandits (e.g., \citep{chatterji2020osom}). Nevertheless, we emphasize that \Cref{assumption: lambda min} can hold even in cases where \Cref{eq: sufficient assumption} does not hold. For example, consider a case where popularity bias does not exist. In this case, we have
\begin{align*}
    \expect*{t-1}{
        \begin{pmatrix}
            \bs{\Sigma}_{\bs{q}\bs{q}}(u_t, \bs{s}_t) & \bs{\Sigma}_{\bs{q}\bs{p}}(u_t, \bs{s}_t) \\ \bs{\Sigma}_{\bs{p}\bs{q}}(u_t, \bs{s}_t) & \bs{\Sigma}_{\bs{p}\bs{p}}(u_t, \bs{s}_t)
        \end{pmatrix}
    }
    =
    \expect*{t-1}{
        \begin{pmatrix}
            \bs{\Sigma}_{\bs{q}\bs{q}}(u_t, \bs{s}_t) & \bs{0} \\ 
            \bs{0} & \bs{0}
        \end{pmatrix}
    }.
\end{align*}
While clearly 
$\expect*{t-1}{
        \begin{pmatrix}
            \bs{\Sigma}_{\bs{q}\bs{q}}(u_t, \bs{s}_t) & \bs{0} \\ 
            \bs{0} & \bs{0}
\end{pmatrix}} \not\succ 0$, one can see that 
$\expect*{t-1}{
        \begin{pmatrix}
            \rho \bs{\Sigma}_{\bs{q}\bs{q}}(u_t, \bs{s}_t) & \bs{0} \\ 
            \bs{0} & \bs{0}
\end{pmatrix}} \succeq 0$, for all $\rho \in (0,1)$. Indeed, \Cref{assumption: lambda min} is less demanding than previous assumptions used in the contextual bandit literature. Moreover, the assumption is used only to disambiguate popularity from quality. To gain intuition as to why \Cref{assumption: lambda min} becomes stronger as $\rho \to 0$, notice it does not hold whenever $\rho = 0$ and $\expect*{t-1}{\bs{\Sigma}_{\bs{q}\bs{q}}(u_t, \bs{s}_t)}, \expect*{t-1}{\bs{\Sigma}_{\bs{p}\bs{p}}(u_t, \bs{s}_t)} \neq \bs{0}$, since the matrix
\begin{align*}
    \expect*{t-1}{
        \begin{pmatrix}
            \bs{0} & \bs{\Sigma}_{\bs{q}\bs{p}}(u_t, \bs{s}_t) \\ \bs{\Sigma}_{\bs{p}\bs{q}}(u_t, \bs{s}_t) & \bs{\Sigma}_{\bs{p}\bs{p}}(u_t, \bs{s}_t)
        \end{pmatrix}
    }
\end{align*}
can be positive semidefinite if and only if $\bs{\Sigma}_{\bs{q}\bs{p}}(u_t, \bs{s}_t) = \bs{0}$. Since we assumed $\expect*{t-1}{\bs{\Sigma}_{\bs{q}\bs{q}}(u_t, \bs{s}_t)}, \expect*{t-1}{\bs{\Sigma}_{\bs{p}\bs{p}}(u_t, \bs{s}_t)} \neq \bs{0}$, this cannot hold, rendering the assumption empty for $\rho = 0$. Indeed, the regret bound in \Cref{thm: regret upper bound} improves as $\rho$ nears zero, due to the assumption injecting stronger disambiguation between popularity and quality.

Finally, we show an example for which \Cref{assumption: lambda min} does not hold. In such cases disambiguation between quality and popularity is not possible, as shown by our regret lower bound in \Cref{theorem: lower bound}. Consider, for example a case where 
\begin{align*}
        \expect*{t-1}{
        \begin{pmatrix}
            \bs{\Sigma}_{\bs{q}\bs{q}}(u_t, \bs{s}_t) & \bs{\Sigma}_{\bs{q}\bs{p}}(u_t, \bs{s}_t) \\ \bs{\Sigma}_{\bs{p}\bs{q}}(u_t, \bs{s}_t) & \bs{\Sigma}_{\bs{p}\bs{p}}(u_t, \bs{s}_t)
        \end{pmatrix}
        }
        =
        \begin{pmatrix}
            \bs{1}_{d_q \times d_q} & \bs{1}_{d_q \times d_p} \\
            \bs{1}_{d_p \times d_q} & \bs{1}_{d_p \times d_p}
        \end{pmatrix}
        =
        \bs{1}_{d \times d}.
\end{align*}
Clearly, $\bs{1}_{d \times d}$ is a rank-one matrix, and is positive-semidefinite. Considering \Cref{assumption: lambda min}, letting $\bs{x} = \brk*{1, 0, 0, \hdots, 0, -1}^T$, we have that
\begin{align*}
    \bs{x}^T
    \begin{pmatrix}
            \rho \bs{1}_{d_q \times d_q} & \bs{1}_{d_q \times d_p} \\
            \bs{1}_{d_p \times d_q} & \bs{1}_{d_p \times d_p}
    \end{pmatrix}
    \bs{x}
    =
    \rho - 1 < 0, \forall \rho \in (0,1).
\end{align*}
Hence, \Cref{assumption: lambda min} does not hold for any $\rho \in (0,1)$.

To conclude, we believe \Cref{assumption: lambda min} to be a reasonable assumption for most use cases. It is less demanding than previous work and captures the hardness of our problem in terms of disambiguation between quality and popularity, as captured by our regret bound in \Cref{thm: regret upper bound}.

\newpage
\section{Ablations}
\label{appendix: experiments}

We tested the QP-Ranker in \Cref{alg: qrp-ucb} on a series of synthetic environments. For all our experiments we uniformly sampled parameters and embeddings in $\brk[s]*{0, \frac{1}{\sqrt{d}}}^d$. We also scaled the popularity bias, $\popularitybias$ w.r.t. $b_{\max}$ to understand the affect of this scaling. We used the same dimension for $d_q$ and $d_p$, and denote both of them as $d$ here. We varied over problem parameters including $M, d, \alpha_{\min}$, and $b_{\max}$. For each experiment we fixed all parameters to default values, as shown in the table below.
\begin{table}[h!]
    \centering
    \begin{tabular}{c|c|c|c|c}
        $M$ & $d$ & $\alpha_{\min}$ & $b_{\max}$ & $T$\\
        \hline
        $3$ & $8$ & $0.02$ & $0.2$ & $10000$
    \end{tabular}
\end{table}

\Cref{fig: regret parameters} depicts regret of the QP-Ranker w.r.t. variations in $M, d, \alpha_{\min}$, and $b_{\max}$. While \Cref{thm: regret upper bound} shows a $M^3$ dependence in the regret, we found that the dependence to be much better in our synthetic environments. Additionally, we found that, while $\alpha_{\min}$ did significantly increased overall regret (due to its effect in the constant in \Cref{theorem: asymptotic optimality}), we found this effect to diminish. We believe this is due to the ranker not needing to explore the full corpus, for which case saturation of a few items is enough to achieve the desired result. Finally, we found $b_{\max}$ to strongly affect overall regret. This is expected, as lower values of $b_{\max}$ result in a lower upper bound on $\disp$ (values lower than $1$).

\begin{figure}[t!]
    \centering
    \begin{subfigure}[b]{0.45\textwidth}
    \centering
    \includegraphics[width=\linewidth]{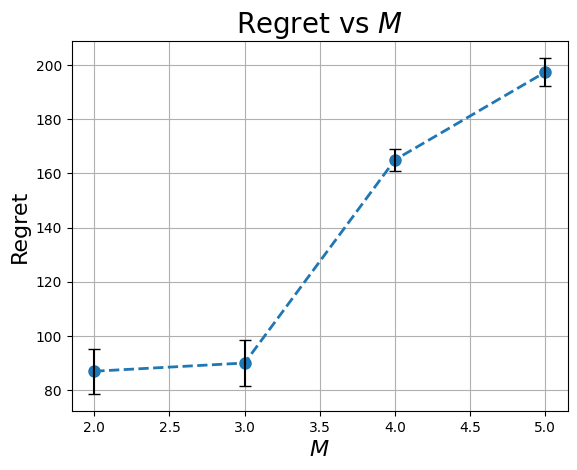}
    \end{subfigure}
    \hspace*{1cm}
    \begin{subfigure}[b]{0.45\textwidth}
    \centering
    \includegraphics[width=\linewidth]{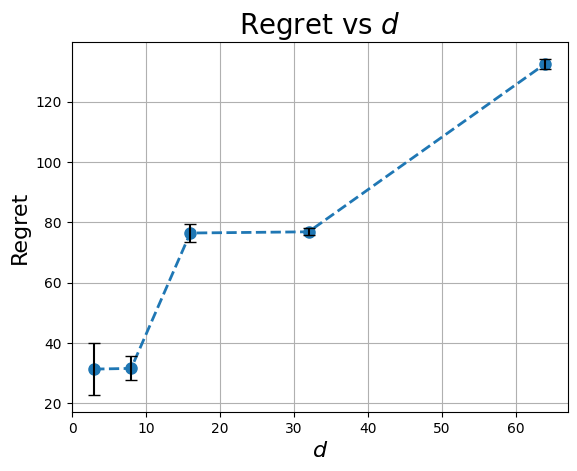}
    \end{subfigure}
    \begin{subfigure}[b]{0.45\textwidth}
    \centering
    \includegraphics[width=\linewidth]{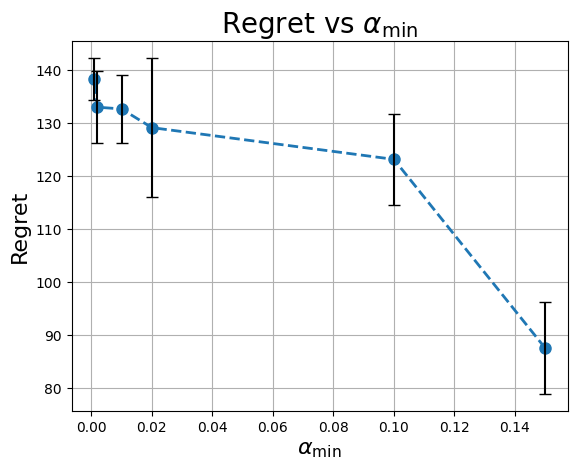}
    \end{subfigure}
    \hspace*{1cm}
    \begin{subfigure}[b]{0.45\textwidth}
    \centering
    \includegraphics[width=\linewidth]{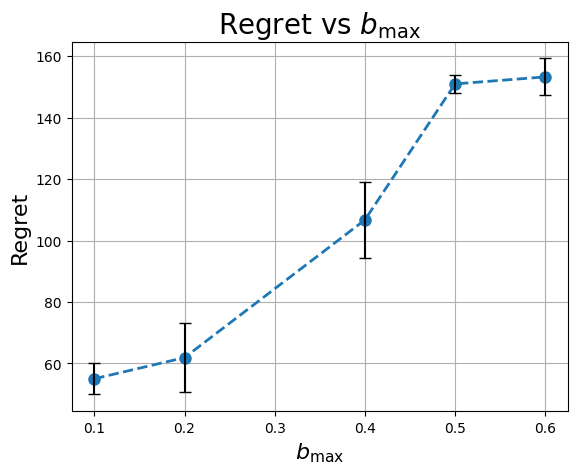}
    \end{subfigure}
    \caption{Varying parameters of the QP-Ranker. Results show average regret over final 100 time steps, averaged over 10 seeds. Our results exhibit behavior consistent with our regret.}
    \label{fig: regret parameters}
\end{figure}

\newpage
\section{Proof of Theorem~\ref{theorem: lower bound}}
\label{appendix: impossibility proof}

\begin{proof}
    Without loss of generality, consider the case of a slate containing a single item ($M=1$) and a corpus of two items $\gD = \brk[c]*{I_1,I_2}$. We assume the problem can be one of two instances; the first instance $\nu_1$ with quality biases $(\qualitybias^1(I_1),\qualitybias^1(I_1))=(\epsilon,-\epsilon)$, for some small $\epsilon\in(0,1]$, and the second instance $\nu_2$ with  $(\qualitybias^1(I_1),\qualitybias^1(I_1))=(-\epsilon,\epsilon)$. At every time an item is played, its popularity bias increases by at least $\alpha_{\min}\in(0,1]$, and the maximal popularity biases are $(b(I_1),b(I_2)) = (1-\epsilon,1+\epsilon)$ in instance $\nu_1$ and $(b(I_1),b(I_2)) = (1+\epsilon,1-\epsilon)$ in instance $\nu_2$. Notice that in both problems, when an item reaches its maximal popularity bias, we have $\qualitybias(I)+\popularitybias(I)=1=\disp_{\max}$. 
    
    We denote by $n_t^p(I)$, the number of times item $I$ was played up to time $t-1$, and by $n_t(I)$, the number of time it was selected --generated a reward of $1$. We also denote the probability that an item $I$ was selected at round $t$ under instance $i$ by $\mu^i_t(I)=\frac{\exp\brk{\qualitybias^i(I)+\popularitybias^i_t(I)}}{1+\exp\brk{\qualitybias^i(I)+\popularitybias^i_t(I)}}$. 
    
    Now, recall that if $x\in[-1,1]$ and $y\in[0,1-x]$, using the monotonicity and Lipchitz constant of the logistic function, one can bound
    \begin{align*}
        \frac{\exp(x+y)}{1+\exp(x+y)} \ge \frac{\exp(x)}{1+\exp(x)}+\frac{e}{(1+e)^2}y\ge \frac{\exp(x)}{1+\exp(x)}
        \ge 0.25 +\frac{y}{6}
    \end{align*}
    
    In particular, given an item $I$ of quality $q\in[-1,1]$, its selection probability lies in the interval
    \begin{align*}
        \mu_t(I)\in\brk[s]*{\frac{\exp\brk{q+\alpha_{\min}n_t(I)}}{1+\exp\brk{q+\alpha_{\min}n_t(I)}},\frac{\exp\brk{\disp_{\max}}}{1+\exp\brk{\disp_{\max}}}}
        \subseteq \brk[s]*{0.25+\frac{\alpha_{\min}}{6}n_t(I),0.75},
    \end{align*}
    where by convention, if the interval is empty then $\mu_t(I)\triangleq\mu^s=0.75$. Similarly, we denote the distribution of the item click rate in problem $i$ at round $t$ by $\nu_{i,t}$
    
    Assume any fixed strategy and let $H_t$ denote the history of the decision process up to time $t$, including all internal randomization up to time $t+1$. Specifically, we denote the selection outcome at time $t$ by $Y_t$ (namely, $Y_t=1$ if the item was selected) and by $U_t$, the internal randomization for time $t+1$ . Finally, let $\sP_\nu$ denote the probability measure w.r.t. arm distribution $\nu$. Following the derivation of inequality (6) in \citep{garivier2019explore}, we have by the chain rule of KL divergence that
    \begin{align*}
        KL\brk{\sP_{\nu_1}^{H_{t+1}},\sP_{\nu_2}^{H_{t+1}}}
        &=
        KL\brk{\sP_{\nu_1}^{(H_t,Y_{t+1},U_{t+1})},\sP_{\nu_2}^{(H_t,Y_{t+1},U_{t+1})}}\\
        &= 
        KL\brk{\sP_{\nu_1}^{H_t},\sP_{\nu_2}^{H_t}} + KL\brk{\sP_{\nu_1}^{(Y_{t+1},U_{t+1})\vert H_t},\sP_{\nu_2}^{(Y_{t+1},U_{t+1})\vert H_t}}
    \end{align*}

Next, we can write
\begin{align*}
    KL\brk{\sP_{\nu_1}^{(Y_{t+1},U_{t+1})\vert H_t},\sP_{\nu_2}^{(Y_{t+1},U_{t+1})\vert H_t}}
    &= \expect*{\nu_1}{\expect*{\nu_1}{ KL\brk{\nu_{1,t}(I_t)\otimes U_{t+1},\nu_{2,t}(I_t)\otimes U_{t+1}\vert H_t}}} \\
    &= \expect*{\nu_1}{\expect*{\nu_1}{ KL\brk{\nu_{1,t}(I_t),\nu_{2,t}(I_t)\vert H_t}}} \\
    &= \expect*{\nu_1}{\indicator{I_t=I_1}kl\brk*{\mu_t^1(I_1),\mu_t^2(I_1)}+ \indicator{I_t=I_2}kl\brk*{\mu_t^1(I_2),\mu_t^2(I_2)}},
\end{align*}
where $kl(\mu_1,\mu_2)$ is the KL divergence between two Bernoulli random variables of means $\mu_1,\mu_2$. Importantly, notice that the expectation follows the history given problem instance $\nu_1$, thus, if under this instance $n_t(I)>\frac{3}{\alpha_{\min}}$, we will have $\mu_t^1(I)=\mu_t^2(I)=\mu^s$ and $kl(\mu_t^1(I),\mu_t^2(I))=0$. Otherwise, since arm means are bounded in $[0.25,0.75]$, one can easily see that 
\begin{align*}
    kl(\mu_t^1(I),\mu_t^2(I)) 
    \leq \max\brk[c]{kl(0.25,0.75), kl(0.75,0.25)}
    \leq 1
\end{align*}
Substituting back yields the bound
\begin{align*}
    KL&\brk{\sP_{\nu_1}^{(Y_{t+1},U_{t+1})\vert H_t},\sP_{\nu_2}^{(Y_{t+1},U_{t+1})\vert H_t}} \\
    & \leq \expect*{\nu_1}{\indicator{I_t=I_1,n_t(I_1)\leq \frac{3}{\alpha_{\min}}} + \indicator{I_t=I_2,n_t(I_2)\leq \frac{3}{\alpha_{\min}}}},
\end{align*}
and iterating over the chain rule of the KL divergence yields
\begin{align*}
     KL\brk{\sP_{\nu_1}^{H_{t+1}},\sP_{\nu_2}^{H_{t+1}}}
     &\leq \expect*{\nu_1}{\sum_{s=1}^t\indicator{I_t=I_1,n_t(I_1)\leq \frac{3}{\alpha_{\min}}}} 
     + \expect*{\nu_1}{\indicator{I_t=I_2,n_t(I_2)\leq \frac{3}{\alpha_{\min}}}}.
\end{align*}
We move forward and bound each of these sums. In particular, one can write
\begin{align*}
    \expect*{\nu_1}{\indicator{I_t=I,n_t(I)\leq \frac{3}{\alpha_{\min}}}}
    &= \expect*{\nu_1}{\indicator{Y_{t+1}=1,I_t=I,n_t(I)\leq \frac{3}{\alpha_{\min}}}}\\
    &\quad+ \expect*{\nu_1}{\indicator{Y_{t+1}=0,I_t=I,n_t(I)\leq \frac{3}{\alpha_{\min}}}} \\
    &\leq \expect*{\nu_1}{\indicator{Y_{t}=1,I_t=I,n_t(I)\leq \frac{3}{\alpha_{\min}}}}\\
    &\quad+ 0.75\expect*{\nu_1}{\indicator{I_t=I,n_t(I)\leq \frac{3}{\alpha_{\min}}}}.
\end{align*}
Reorganizing and noticing that $Y_{t}=1$ implies that $n_{t+1}^c(I)=n_t(I)+1$, we get 
\begin{align*}
    &\expect*{\nu_1}{\indicator{I_t=I,n_t(I)\leq \frac{3}{\alpha_{\min}}}}
    \leq 4 \expect*{\nu_1}{\indicator{n_{t+1}^c(I)=n_t(I)+1,n_t(I)\leq \frac{3}{\alpha_{\min}}}},
\end{align*}
and summing while noting that both events in the indicator can be active only for $ \frac{3}{\alpha_{\min}}+1$ times yields
\begin{align*}
     KL\brk{\sP_{\nu_1}^{H_{t+1}},\sP_{\nu_2}^{H_{t+1}}}
     &\leq 8\brk*{\frac{3}{\alpha_{\min}}+1} \leq \frac{25}{\alpha_{\min}}.
\end{align*}
The next step of the proof is to apply the data-processing inequality for the KL divergence (Lemma 1 of \citealt{garivier2019explore}), with the random variables $Z=\frac{n_{t+1}(I)}{t}\in[0,1]$ for any $I\in\brk[c]{I_1,I_2}$. By doing so, one get that
\begin{align}
    kl\brk*{\frac{\expect{\nu_1}{n_{t+1}^p(I)}}{t},\frac{\expect{\nu_2}{n_{t+1}^p(I)}}{t}} \leq KL\brk{\sP_{\nu_1}^{H_{t+1}},\sP_{\nu_2}^{H_{t+1}}}
     &\leq \frac{25}{\alpha_{\min}}.
\end{align}
Finally, recall that for any $p,q\in(0,1)$, we have
\begin{align*}
    kl(p,q) 
    &= p\ln\frac{p}{q}+(1-p)\ln\frac{1-p}{1-q} \\
    &= p\ln\frac{1}{q} +\underbrace{\brk{p\ln p+(1-p)\ln(1-p)}}_{\ge-\ln2}+ \underbrace{(1-p)\ln\frac{1}{1-q}}_{\ge0} \\
    & \ge p\ln\frac{1}{q}-\ln 2.
\end{align*}
Substituting back, we have 
\begin{align*}
    \expect{\nu_1}{n_{t+1}^p(I_1)}
    \leq \brk*{\frac{25}{\alpha_{\min}} + \ln2}\brk*{\ln\frac{t}{\expect{\nu_2}{n_{t+1}^p(I_1)}}}^{-1}\cdot t
    \leq \frac{26}{\alpha_{\min}}\brk*{\ln\frac{t}{\expect{\nu_2}{n_{t+1}^p(I_1)}}}^{-1}\cdot t.
\end{align*}
Specifically, either that $\expect{\nu_2}{n_{t+1}^p(I_1)}\ge \exp\brk{-52/\alpha_{\min}}t = \Omega(t)$ or, through direct substitution, $\expect{\nu_1}{n_{t+1}^p(I_1)}\le t/2$, i.e., $\expect{\nu_1}{n_{t+1}^p(I_2)} = \Omega(t)$. In other words, any algorithm would choose the suboptimal item a linear number of times in at least one of the two problems, and since the items have strictly positive quality gap, this would incur a linear regret.
\end{proof}

\newpage
\section{Proof of Theorem~\ref{theorem: asymptotic optimality}}

\begin{proof}

We denote $p_{\min} = \min_{i \in [M], u_t \in \gU \bs{s}_t \in \gS, t \geq 1}z_i(\qualitybias\brk*{u_t, \bs{s}_t} + \rankbias + \popularitybias\brk*{u_t, \bs{s}_t})$. Notice that due to boundness of $\disp_t$ for all $t$, $p_{\text{min}} \geq \Omega\brk*{\frac{1}{M}}$. 

Next, we let $N_t(I)$ denote the number of times item $I$ was selected by the \emph{ranker} up to time $t$. That is, $N_t(I) = \sum_{k=1}^t \indicator{I \in \bs{s}_k}$. Note that it is not necessarily true that $N_t(I) = n_t(I)$. 

Let $I \in \gD, m \in [T]$, and let $\tau^*_{i}(I) = \min\brk[c]*{t \geq 1: N_t(I) = i}$. For any $m \in [T]$, we define the coupling $\brk[c]*{(X_i, Y_i)}_{i=1}^m$, where $\brk[c]*{X_i}_{i=1}^m$ are iid Bernouli random variables with probability $p_{\min}$ and $Y_i = \indicator{c_{\tau^*_{i}(I)} = I}$. Then, for any $a \in \R$, 
\begin{align}
\label{eq: coupling}
    P\brk*{\sum_{i=1}^m Y_i \leq a, N_t(I) = m}
    \leq
    P\brk*{\sum_{i=1}^m X_i \leq a, N_t(I) = m}.
\end{align}

Denote $m_0 = \frac{8b_{\max}}{p_{\min}\alpha_{\min}}\log(\frac{3\abs{\gD}}{\delta})$ and let $t \geq m_0$. Then,
\begin{align*}
&P\brk*{n_t(I) \leq \frac{b_{\max}}{\alpha_{\min}}, N_t(I)\geq m_0} \\
&=
P\brk*{\sum_{k=1}^t \indicator{c_k = I} \leq \frac{b_{\max}}{\alpha_{\min}}, N_t(I)\geq m_0} \\
&\leq
\sum_{m=m_0}^\infty
P\brk*{\sum_{k=1}^t \indicator{c_k = I} \leq \frac{b_{\max}}{\alpha_{\min}}, N_t(I) = m} 
\tag{\text{Union Bound}} \\
&\leq
\sum_{m=m_0}^\infty
P\brk*{\sum_{k : I \in S_k} \indicator{c_k = I} \leq \frac{b_{\max}}{\alpha_{\min}}, N_t(I) = m} \\
&=
\sum_{m=m_0}^\infty
P\brk*{\sum_{i=1}^{N_t(I)} \indicator{c_{\tau^*_{i}(I)} = I} \leq \frac{b_{\max}}{\alpha_{\min}}, N_t(I) = m} \\
&\leq
\sum_{m=m_0}^\infty
P\brk*{\sum_{i=1}^{N_t(I)} X_i \leq \frac{b_{\max}}{\alpha_{\min}}, N_t(I) = m} \tag{\Cref{eq: coupling}} \\
&=
\sum_{m=m_0}^\infty
P\brk*{\sum_{i=1}^{m} X_i \leq \frac{b_{\max}}{\alpha_{\min}}, N_t(I) = m} \\
&\leq
\sum_{m=m_0}^\infty
P\brk*{\sum_{i=1}^{m} X_i \leq \frac{b_{\max}}{\alpha_{\min}}}
\end{align*}
By Hoeffdings inequality, since $X_i \overset{\mathrm{iid}}{\sim} \text{Bern}\brk*{p_{\min}}$,
\begin{align*}
    P\brk*{\sum_{i=1}^{m} X_i - p_{\min}m \leq -\frac{p_{\min}m}{2}} \leq \exp\brk[c]*{-m/2}.
\end{align*}
Therefore,
\begin{align*}
    P\brk*{\sum_{i=1}^{m} X_i \leq \frac{p_{\min}m}{2}} \leq \exp\brk[c]*{-m/2}.
\end{align*}
Using the above and the definition of $m_0$ we get that
\begin{align*}
    P\brk*{n_t(I) \leq \frac{b_{\max}}{\alpha_{\min}}, N_t(I)\geq m_0}
    &\leq
    \sum_{m=m_0}^\infty
    P\brk*{\sum_{i=1}^{m} X_i \leq \frac{b_{\max}}{\alpha_{\min}}} \\
    &\leq
    \sum_{m=m_0}^\infty
    P\brk*{\sum_{i=1}^{m} X_i \leq \frac{p_{\min}m}{2}} \\
    &\leq
    \sum_{m=m_0}^\infty
    \exp\brk[c]*{-m/2} \\
    &\leq
    3\exp\brk[c]*{-m_0/2} 
    \leq 
    \frac{\delta}{\abs{\gD}}.
\end{align*}
By the union bound, for any $t \geq m_0$
\begin{align*}
    P\brk*{\bigcup_{I \in \gD} \brk[c]*{n_t(I) \leq \frac{b_{\max}}{\alpha_{\min}}, N_t(I)\geq m_0}} \leq \delta.
\end{align*}

By \Cref{assump: popularity bias}, $\brk[c]*{\brk[s]*{\popularitybias_{t}\brk*{u, \bs{s}_t}}_i < b_i(u, \bs{s}_t)} \subseteq \brk[c]*{n_t(I_{i,t}) \leq \frac{b_{\max}}{\alpha_{\min}}}$. Therefore, it also holds that
\begin{align*}
    P\brk*{\bigcup_{I \in \gD}
    \brk[c]*{\exists i\le M: I_{i,t}=I, \brk[s]*{\popularitybias_{t}\brk*{u, \bs{s}_t}}_i < b_i(u, \bs{s}_t), N_t(I)\geq m_0}} \leq \delta.
\end{align*} 
In other words, w.p. at least $1-\delta$, for all $I \in \gD$ and $i\le M$, if item $I$ is in the slate ($I_{i,t}=I$) and $N_t(I)\geq m_0$, then its popularity bias is saturated ($\brk[s]*{\popularitybias_{t}\brk*{u, \bs{s}_t}}_i = b_i(u, \bs{s}_t)$). On the other hand, if an item is in the slate and $N_t(I)< m_0$, then $N_t(I)$ increases by $1$. Thus, by the pigeonhole principle, the number of rounds such that an item in the slate has $N_t(I)< m_0$ is bounded by $(m_0+1)\abs{\gD}$.

Finally, noticing that when all items in the slate are saturated, the quality ranker is optimal, and otherwise, the instantaneous regret is bounded by $1$, we get w.p. at least $1-\delta$ that
\begin{align*}
    \text{Reg}_{\gR_q}(T)
    &\leq
    \sum_{t=1}^T\indicator{\exists i\le M: \brk[s]*{\popularitybias_{t}\brk*{u, \bs{s}_t}}_i < b_i(u, \bs{s}_t)}\\
    &\leq
    \sum_{t=1}^T\indicator{\exists i\le M: I_{i,t}=I, N_t(I)< m_0}\\
    &\leq 
    \min\brk[c]*{(m_0+1)\abs{\gD}, T}.
\end{align*}
Plugging in the definition of $m_0$ and using the fact that $p_{\text{min}} \geq \Omega\brk*{\frac{1}{M}}$ completes the proof.

\end{proof}

\newpage
\section{Proof of Theorem~\ref{thm: regret upper bound}}
\label{appendix: regret proof}

For clarity, we denote 
\begin{align*}
    v_t(u, \bs{s}, \bs{\theta}, \bs{\phi}) 
    = 
    \sum_{i=1}^M 
    \brk[s]*{\qualitybias_{\bs{\theta}}(u, \bs{s})}_i
    z_i\brk*{\qualitybias_{\bs{\theta}}(u, \bs{s}) + \popularitybias_{t, \bs{\phi}}(u, \bs{s}, \bs{c}_{1:t-1}) + \rankbias}.
\end{align*}
and
\begin{align*}
    v_{\text{sat}}(u, \bs{s}, \bs{\theta}, \bs{\phi}) 
    = 
    \sum_{i=1}^M 
    \brk[s]*{\qualitybias_{\bs{\theta}}(u, \bs{s})}_i
    z_i\brk*{\qualitybias_{\bs{\theta}}(u, \bs{s}) + \bs{b}_{\bs{\phi}}(u, \bs{s}) + \rankbias}.
\end{align*}
Also, recall that 
\begin{align}
\label{eq: UCB policy}
    \bs{s}_t
    \in 
    \arg\max_{\bs{s} \in \gS} 
    v_{\text{sat}}(u_t, \bs{s}, \bs{\theta}_t, \bs{\phi}_t) 
    + 
    \epsilon_t(\bs{s}).
\end{align}
Let $\gE^{\text{sat}}_t = \brk[c]*{ \brk[c]*{\popularitybias_{t}\brk*{u, \bs{s}_t, \bs{c}_{1:t-1}} = \bs{b}(u, \bs{s}_t)}}.$ Similar to \Cref{lemma: error bound}, with probability at least $1-\delta$, we have that
\begin{align*}
    \sum_{t=1}^T \indicator{\brk*{\gE^{\text{sat}}_t}^c}
    \leq
    \frac{8\abs{\gD}Mb_{\max}}{\alpha_{\min}}\log\brk*{\frac{2\abs{\gD}}{\delta}}.
\end{align*}
Then, with probability at least $1-\delta$,
\begin{align*}
    \text{Reg}_{\gR_{qp}}(T)
    &=
    \sum_{t=1}^T 
    \expect*{u_t}
    {
        \brk*{
            v_t(u_t, \bs{s}^q, \bs{\theta}^*, \bs{\phi}^*)
            -
            v_t(u_t, \bs{s}_t, \bs{\theta}^*, \bs{\phi}^*)
        }
    } \\
    &\leq
    \sum_{t=1}^T 
    \expect*{u_t}
    {
        \brk*{
            v_t(u_t, \bs{s}^q, \bs{\theta}^*, \bs{\phi}^*)
            -
            v_t(u_t, \bs{s}_t, \bs{\theta}^*, \bs{\phi}^*)
        }
        \indicator{\gE^{\text{sat}}_t}
    }
    +
    \sum_{t=1}^T 2 \indicator{\brk*{\gE^{\text{sat}}_t}^c}
    \\
    &\leq
    \sum_{t=1}^T 
    \expect*{u_t}
    {
        v_{\text{sat}}(u_t, \bs{s}^q, \bs{\theta}^*, \bs{\phi}^*)
        -
        v_{\text{sat}}(u_t, \bs{s}_t, \bs{\theta}^*, \bs{\phi}^*)
    }
    +
    \frac{16\abs{\gD}Mb_{\max}}{\alpha_{\min}}\log\brk*{\frac{2\abs{\gD}}{\delta}} \\
    &\leq
    \sum_{t=1}^T 
    \expect*{u_t}
    {
        v_{\text{sat}}(u_t, \bs{s}^q, \bs{\theta}_t, \bs{\phi}_t)
        +
        \epsilon_t\brk*{\bs{s}^q}
        -
        v_{\text{sat}}(u_t, \bs{s}_t, \bs{\theta}_t, \bs{\phi}_t)
        +
        \epsilon_t\brk*{\bs{s}_t}
    }
    +
    \frac{16\abs{\gD}Mb_{\max}}{\alpha_{\min}}\log\brk*{\frac{2\abs{\gD}}{\delta}}
    \tag{\Cref{lemma: error bound}} \\
    & \leq
    \sum_{t=1}^T 
    \expect*{u_t}
    {
        v_{\text{sat}}(u_t, \bs{s}_t, \bs{\theta}_t, \bs{\phi}_t)
        +
        \epsilon_t\brk*{\bs{s}_t}
        -
        v_{\text{sat}}(u_t, \bs{s}_t, \bs{\theta}_t, \bs{\phi}_t)
        +
        \epsilon_t\brk*{\bs{s}_t}
    } 
    +
    \frac{16\abs{\gD}Mb_{\max}}{\alpha_{\min}}\log\brk*{\frac{2\abs{\gD}}{\delta}}
    \tag{\Cref{eq: UCB policy}} \\
    &=
    2 
    \sum_{t=1}^T 
    \epsilon_t\brk*{\bs{s}_t} 
    +
    \frac{16\abs{\gD}Mb_{\max}}{\alpha_{\min}}\log\brk*{\frac{2\abs{\gD}}{\delta}} \\
    &\leq
    2 
    \sum_{t=1}^T 
    \brk*{4\sqrt{M} + \frac{1}{\sqrt{1-\rho_{min}}}}
    \gamma_t(\delta)
    \sqrt{
        \expect*{t-1}{
            \norm{
                \bs{x}(u, \bs{s})
            }_{\bs{V}_t^{-1}}^2
    }}
    +
    \frac{16\abs{\gD}Mb_{\max}}{\alpha_{\min}}\log\brk*{\frac{2\abs{\gD}}{\delta}} \\
    &\leq
    2 \brk*{4\sqrt{M} + \frac{1}{\sqrt{1-\rho_{min}}}}
    \gamma_T(\delta)
    \sqrt{
        T 
        \expect*{t-1}{
        \sum_{t=1}^T
            \norm{
                \bs{x}(u, \bs{s})
            }_{\bs{V}_t^{-1}}^2
    }}
    +
    \frac{16\abs{\gD}Mb_{\max}}{\alpha_{\min}}\log\brk*{\frac{2\abs{\gD}}{\delta}}
    \tag{Cauchy–Schwarz ineq.} \\
    &\leq
    \gO\brk*{
    \brk*{\sqrt{M} + \frac{1}{\sqrt{1-\rho_{min}}}}
    \gamma_T(\delta)
    \sqrt{
        (d_q + d_p)T\log\brk*{1 + \frac{T}{\lambda (d_q + d_p)}
    }}
    }
\end{align*}
where the last inequality is due to \citet{abbasi2011improved} (Lemma 11). Finally letting $\lambda \leq \gO\brk*{\frac{1}{\sqrt{L}}}$ yields the desired result.
\begin{align*}
    \text{Reg}_{\gR_{qp}}(T)
    &\leq
    \gO\brk*{M^2
    \brk*{\sqrt{M} + \frac{1}{\sqrt{1-\rho_{min}}}}
    \sqrt{\log(1/\delta) + Md\log\brk*{1+\frac{TL}{d}}}
    \sqrt{
        d T\log\brk*{1 + \frac{TL}{d}
    }}
    } \\
    &\leq
    \gO\brk*{M^{2.5}
    d\sqrt{T}
    \brk*{\sqrt{M} + \frac{1}{\sqrt{1-\rho_{min}}}}
    \sqrt{\log(1/\delta)}
     \log\brk*{1 + \frac{TL}{d}
    }}
\end{align*}

\begin{lemma}
\label{lemma: error bound}
    For any $\bs{s} \in \gS, t \in [T], \delta \in (0,1)$, with probability at least $1-\delta$
    \begin{align*}
        \abs{
        \expect*{u}
        {
            v_{\text{sat}}(u, \bs{s}, \bs{\theta}^*, \bs{\phi}^*)
            -
            v_{\text{sat}}(u, \bs{s}, \bs{\theta}_t, \bs{\phi}_t)
        }}
        \leq
        \epsilon_t(\bs{s})
    \end{align*}
\end{lemma}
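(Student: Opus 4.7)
The plan is to bound the error by separately controlling the two places where the parameters enter $v_{\text{sat}}$: the outer quality weights $[\qualitybias_{\bs{\theta}}(u,\bs{s})]_i$ and the softmax probabilities $z_i(\qualitybias_{\bs{\theta}} + \bs{b}_{\bs{\phi}} + \rankbias)$. I would start by adding and subtracting $v_{\text{sat}}(u,\bs{s},\bs{\theta}_t,\bs{\phi}^*)$ (or use a two-term telescoping via $\bs{\theta}^*,\bs{\phi}^* \to \bs{\theta}_t,\bs{\phi}^* \to \bs{\theta}_t,\bs{\phi}_t$), giving
\begin{equation*}
\bigl|v_{\text{sat}}(\bs{\theta}^*,\bs{\phi}^*) - v_{\text{sat}}(\bs{\theta}_t,\bs{\phi}_t)\bigr|
\;\le\; \underbrace{\bigl|\langle \qualitybias_{\bs{\theta}^*} - \qualitybias_{\bs{\theta}_t},\, \bs{z}(\cdot)\rangle\bigr|}_{\text{(A): quality-coefficient error}}
\;+\; \underbrace{\bigl|\langle \qualitybias_{\bs{\theta}_t},\, \bs{z}(\text{disp}_{\bs{\theta}^*,\bs{\phi}^*}) - \bs{z}(\text{disp}_{\bs{\theta}_t,\bs{\phi}_t})\rangle\bigr|}_{\text{(B): softmax error}}.
\end{equation*}
Term (B) is handled with the $\tfrac{1}{4}$-Lipschitzness of $\bs{z}$ (bounded by a constant times $M$ after summing with $\|\qualitybias_{\bs{\theta}_t}\|_\infty\le 1$), reducing it to bounding the disposition error $|\bs{x}^T(u,\bs{s})(\bs{\psi}_t-\bs{\psi}^*)|$, which by Cauchy--Schwarz is at most $\|\bs{\psi}_t-\bs{\psi}^*\|_{\bs{V}_t}\cdot\|\bs{x}(u,\bs{s})\|_{\bs{V}_t^{-1}}$. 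The projected-MLE confidence bound $\|\bs{\psi}_t-\bs{\psi}^*\|_{\bs{V}_t}\le \gamma_t(\delta)$ (standard for multinomial/GLM bandits as in Amani et al.\ and Faury et al.) closes this part and contributes the $4\sqrt{M}\,\gamma_t(\delta)$ factor.

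Term (A) is where the identifiability structure of Assumption~\ref{assumption: lambda min} is essential. After bounding the sum of probabilities by $M$ we are left to control $|\bs{x}_q^T(u,\bs{s})(\bs{\theta}^*_i-\bs{\theta}_{t,i})|$. By Cauchy--Schwarz this is $\|\bs{\theta}^*_i-\bs{\theta}_{t,i}\|_{A}\cdot\|\bs{x}_q(u,\bs{s})\|_{A^{-1}}$ for any PSD $A$; the natural choice is the quality-block marginal design matrix. The difficulty is that we only control $\|\bs{\psi}_t-\bs{\psi}^*\|_{\bs{V}_t}$ in the joint metric, and $\bs{V}_t$ mixes quality and popularity features, so one cannot naively project onto the quality coordinates. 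This is where I would invoke Assumption~\ref{assumption: lambda min}: it implies, via a Schur-complement calculation, that taking expectations of the per-round outer products yields
\begin{equation*}
\Exptn_{t-1}\!\bigl[\bs{\Sigma}_{qq}(u_t,\bs{s}_t)\bigr]
\;\succeq\; \tfrac{1}{\rho_{\min}}\,\Exptn_{t-1}\!\bigl[\bs{\Sigma}_{qp}\bs{\Sigma}_{pp}^{-1}\bs{\Sigma}_{pq}\bigr],
\end{equation*}
so that the quality block of $\bs{V}_t$ retains a strictly positive ``residual'' piece of size $\sim (1-\rho_{\min})\cdot$(joint matrix). Concretely, this yields a matrix inequality of the form $(1-\rho_{\min})\,\Exptn[\bs{x}_q \bs{x}_q^T] \preceq \Exptn[\bs{\Sigma}_{qq}] - \Exptn[\bs{\Sigma}_{qp}\bs{\Sigma}_{pp}^{-1}\bs{\Sigma}_{pq}]$ (essentially the Schur complement of $\bs{V}_t$), which allows one to upper bound $\|\bs{x}_q\|^2_{(\text{quality block})^{-1}}$ by $\tfrac{1}{1-\rho_{\min}}\Exptn[\|\bs{x}\|^2_{\bs{V}_t^{-1}}]$.

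Combining, term (A) contributes $\tfrac{1}{\sqrt{1-\rho_{\min}}}\,\gamma_t(\delta)\sqrt{\Exptn_{t-1}\|\bs{x}(u_t,\bs{s})\|^2_{\bs{V}_t^{-1}}}$, which together with the $4\sqrt{M}\,\gamma_t(\delta)\sqrt{\Exptn_{t-1}\|\bs{x}(u_t,\bs{s})\|^2_{\bs{V}_t^{-1}}}$ from term (B) gives exactly $\epsilon_t(\bs{s})$. Throughout, the high-probability bound is taken to come from the single $1-\delta$ event on which $\|\bs{\psi}_t-\bs{\psi}^*\|_{\bs{V}_t}\le \gamma_t(\delta)$ holds uniformly in $t$; it is this event that all subsequent arguments condition on. The main obstacle is the Schur-complement/identifiability step for term (A): carefully deriving the $\tfrac{1}{\sqrt{1-\rho_{\min}}}$ factor from Assumption~\ref{assumption: lambda min} is the non-routine part, since one must translate a PSD ordering on conditional-expectation matrices into a valid inequality between the realized quadratic form $\|\bs{x}_q\|^2$ and its joint counterpart $\|\bs{x}\|^2_{\bs{V}_t^{-1}}$ (hence the expectation $\Exptn_{t-1}[\cdot]$ appearing inside $\epsilon_t$).
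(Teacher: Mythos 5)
Your decomposition into a quality-coefficient term (A) and a softmax term (B) is the same split the paper uses, and your treatment of (B) --- Cauchy--Schwarz in the joint $\bs{V}_t$ geometry plus the projected-MLE confidence set $\norm{\bs{\psi}^*-\bs{\psi}_t}_{\bs{I}_M\otimes\bs{V}_t}\le\gamma_t(\delta)$ --- is in substance what the paper imports from \citet{amani2021ucb} through \Cref{lemma: multinomial bandit upper bound}. One small slip in (A): the softmax weights should be removed by H\"older with $\norm{\cdot}_\infty\cdot\norm{\cdot}_1$, and since the $z_i$ are softmax probabilities their sum is at most $1$, not $M$; bounding that sum by $M$ as you propose would make term (A) scale as $M/\sqrt{1-\rho_{\min}}$ and it would no longer fit inside $\epsilon_t(\bs{s})$.

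The substantive gap is the identifiability step. Assumption~\ref{assumption: lambda min} is a PSD condition on the \emph{conditional expectation} of the block matrix, and the Schur complement does not commute with expectation, so $\Exptn_{t-1}\brk[s]*{\bs{\Sigma}_{qq}}\succeq\rho^{-1}\,\Exptn_{t-1}\brk[s]*{\bs{\Sigma}_{qp}\bs{\Sigma}_{pp}^{-1}\bs{\Sigma}_{pq}}$ is not implied (worse, the per-round $\bs{\Sigma}_{pp}=\bs{x}_p\bs{x}_p^T$ is rank one, so the inverse inside your expectation does not exist). Moreover, even granting your bound on $\norm{\bs{x}_q}^2$ in the inverted quality block, you still must control $\norm{\bs{\theta}^*_i-\bs{\theta}_{t,i}}$ in the quality-block design norm by the joint radius $\gamma_t(\delta)$, and that is precisely where the assumption is needed: $\bs{D}\bs{V}_t\bs{D}\preceq\bs{V}_t$ is false in general for the coordinate projection $\bs{D}$ onto the quality block. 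The paper instead attaches the $1/\sqrt{1-\rho_{\min}}$ to the parameter side: write each coordinate difference as $\bs{x}^T(u,\bs{s})\,\bs{D}\brk*{\bs{\psi}^*_i-\bs{\psi}_{i,t}}$, apply Cauchy--Schwarz to get $\norm{\bs{x}(u,\bs{s})}_{\bs{V}_t^{-1}}\norm{\bs{D}(\bs{\psi}^*_i-\bs{\psi}_{i,t})}_{\bs{V}_t}$, and then use the one-line rescaling consequence of the assumption,
\begin{align*}
\frac{1}{1-\rho}\,\bs{V}_t-\bs{D}\bs{V}_t\bs{D}
=
\lambda\brk*{\frac{1}{1-\rho}\bs{I}-\bs{D}}
+
\frac{1}{1-\rho}\sum_{k}
\begin{pmatrix}
\rho\,\bs{\Sigma}_{qq}(u_k,\bs{s}_k) & \bs{\Sigma}_{qp}(u_k,\bs{s}_k)\\
\bs{\Sigma}_{pq}(u_k,\bs{s}_k) & \bs{\Sigma}_{pp}(u_k,\bs{s}_k)
\end{pmatrix}
\succeq 0
\end{align*}
(in conditional expectation, applying Assumption~\ref{assumption: lambda min} round by round), i.e., $\Exptn_{t-1}\brk[s]*{\bs{D}\bs{V}_t\bs{D}}\preceq\frac{1}{1-\rho_{\min}}\bs{V}_t$. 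This yields $\norm{\bs{D}(\bs{\psi}^*_i-\bs{\psi}_{i,t})}^2_{\bs{V}_t}\le\frac{1}{1-\rho_{\min}}\norm{\bs{\psi}^*_i-\bs{\psi}_{i,t}}^2_{\bs{V}_t}\le\frac{\gamma_t(\delta)^2}{1-\rho_{\min}}$ on the same confidence event, producing exactly the second term of $\epsilon_t(\bs{s})$ with no Schur complement, no block inversion, and no mismatch between realized and expected design matrices (the expectation inside $\epsilon_t$ is only over the fresh user). Your sketch locates the right place where the assumption must enter, but the step as written would not go through; it should be replaced by this rescaling argument (\Cref{lemma: quality bound} in the paper).
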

\begin{proof}
    We have that
    \begin{align*}
        &\abs{
        \expect*{u}
        {
            v_{\text{sat}}(u, \bs{s}, \bs{\theta}^*, \bs{\phi}^*)
            -
            v_{\text{sat}}(u, \bs{s}, \bs{\theta}_t, \bs{\phi}_t)
        }}\\
        &=
        \abs{
        \expect*{u}
        {
            \sum_{i=1}^M 
            \brk[s]*{\qualitybias_{\bs{\theta}^*}(u, \bs{s})}_i
            z_i\brk*{\qualitybias_{\bs{\theta}^*}(u, \bs{s}) + \bs{b}_{\bs{\phi}^*}(u, \bs{s}) + \rankbias}
            -
            \brk[s]*{\qualitybias_{\bs{\theta}_t}(u, \bs{s})}_i
            z_i\brk*{\qualitybias_{\bs{\theta}_t}(u, \bs{s}) + \bs{b}_{\bs{\phi}_t}(u, \bs{s}) + \rankbias}
            }} \\
        &\leq
        \abs{
        \expect*{u}
        {
            \sum_{i=1}^M 
            \brk[s]*{\qualitybias_{\bs{\theta}^*}(u, \bs{s})}_i
            \brk3{
            z_i\brk*{\qualitybias_{\bs{\theta}^*}(u, \bs{s}) + \bs{b}_{\bs{\phi}^*}(u, \bs{s}) + \rankbias}
            -
            z_i\brk*{\qualitybias_{\bs{\theta}_t}(u, \bs{s}) + \bs{b}_{\bs{\phi}_t}(u, \bs{s}) + \rankbias}}}} \\
        &~~~+
        \abs{
        \expect*{u}
        {
            \sum_{i=1}^M 
            \brk3{
            \brk[s]*{\qualitybias_{\bs{\theta}^*}(u, \bs{s})}_i
            -
            \brk[s]*{\qualitybias_{\bs{\theta}_t}(u, \bs{s})}_i
            }
            z_i\brk*{\qualitybias_{\bs{\theta}_t}(u, \bs{s}) + \bs{b}_{\bs{\phi}_t}(u, \bs{s}) + \rankbias}}}
        \tag{Triangle Ineq.}
    \end{align*}
    By \Cref{lemma: multinomial bandit upper bound}, 
    \begin{align*}
        &\abs{
        \expect*{u}
        {
            \sum_{i=1}^M 
            \brk[s]*{\qualitybias_{\bs{\theta}^*}(u, \bs{s})}_i
            \brk3{
            z_i\brk*{\qualitybias_{\bs{\theta}^*}(u, \bs{s}) + \bs{b}_{\bs{\phi}^*}(u, \bs{s}) + \rankbias}
            -
            z_i\brk*{\qualitybias_{\bs{\theta}_t}(u, \bs{s}) + \bs{b}_{\bs{\phi}_t}(u, \bs{s}) + \rankbias}
        }}} \\
        &\leq
        4\sqrt{M}\gamma_t(\delta)
        \expect*{u}{
            \norm{
                \bs{x}(u, \bs{s})
            }_{\bs{V}_t^{-1}}
        } \\
        &\leq
        4\sqrt{M}\gamma_t(\delta)
        \sqrt{
        \expect*{u}{
            \norm{
                \bs{x}(u, \bs{s})
            }_{\bs{V}_t^{-1}}^2
        }}.
    \end{align*}
    By Hölder's inequality and \Cref{lemma: quality bound},
    \begin{align*}
        &\abs{
        \expect*{u}
        {
            \sum_{i=1}^M 
            \brk3{
            \brk[s]*{\qualitybias_{\bs{\theta}^*}(u, \bs{s})}_i
            -
            \brk[s]*{\qualitybias_{\bs{\theta}_t}(u, \bs{s})}_i
            }
            z_i\brk*{\qualitybias_{\bs{\theta}_t}(u, \bs{s}) + \bs{b}_{\bs{\phi}_t}(u, \bs{s}) + \rankbias}}
        } \\
        &\leq
        \norm{
        \expect*{u}
        {
            \qualitybias_{\bs{\theta}^*}(u, \bs{s})
            -
            \qualitybias_{\bs{\theta}_t}(u, \bs{s})
        }}_\infty
        \norm{
        \expect*{u}
        {
            \bs{z}\brk*{\qualitybias_{\bs{\theta}_t}(u, \bs{s}) + \bs{b}_{\bs{\phi}_t}(u, \bs{s}) + \rankbias}
        }}_1 \\
        &=
        \norm{
        \expect*{u}
        {
            \qualitybias_{\bs{\theta}^*}(u, \bs{s})
            -
            \qualitybias_{\bs{\theta}_t}(u, \bs{s})
        }}_\infty \\
        &\leq
        \frac{\gamma_t(\delta)}{\sqrt{1-\rho_{min}}}
        \sqrt{
        \expect*{u}{
            \norm{
                \bs{x}(u, \bs{s})
            }_{\bs{V}_t^{-1}}^2
        }}.
     \end{align*}
    Combining the above we get the desired result.
\end{proof}

\begin{lemma}
\label{lemma: quality bound}
    For $t > 1$
    \begin{align*}
        \abs{
        \expect*{t-1}
        {
            \brk[s]*{\qualitybias_{\bs{\theta}^*}(u, \bs{s})}_i
            -
            \brk[s]*{\qualitybias_{\bs{\theta}_t}(u, \bs{s})}_i
        }}
        \leq
        \frac{\gamma_t(\delta)}{\sqrt{1-\rho_{min}}}
        \sqrt{
        \expect*{t-1}{
            \norm{
                \bs{x}(u, \bs{s})
            }_{\bs{V}_t^{-1}}^2
        }} 
    \end{align*}
\end{lemma}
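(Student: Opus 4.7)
The plan is to reduce the bound to three ingredients: (i) Jensen's inequality to promote the scalar quantity we wish to control to the quadratic form $\|\Delta\bs{\theta}_i\|_{\expect*{t-1}{\bs{\Sigma}_{qq}}}$; (ii) \Cref{assumption: lambda min} used as a block inequality to disentangle this quality-only quadratic form from the joint quadratic form on $\bs{\psi}$; and (iii) the standard $\bs{V}_t$-norm confidence bound on $\Delta\bs{\psi}_i = \bs{\psi}^*_i - \hat{\bs{\psi}}_{t,i}$ afforded by the projected MLE together with a Cauchy--Schwarz step that pulls out $\sqrt{\expect*{t-1}{\|\bs{x}(u,\bs{s})\|_{\bs{V}_t^{-1}}^2}}$. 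Throughout, write $\Delta\bs{\theta}_i = \bs{\theta}^*_i - \hat{\bs{\theta}}_{t,i}$, $\Delta\bs{\phi}_i = \bs{\phi}^*_i - \hat{\bs{\phi}}_{t,i}$ and $\Delta\bs{\psi}_i = (\Delta\bs{\theta}_i^T, \Delta\bs{\phi}_i^T)^T$.

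First I would use the linearity of quality (\Cref{assump: parametric form}) to write $\brk[s]*{\qualitybias_{\bs{\theta}^*}(u,\bs{s}) - \qualitybias_{\hat{\bs{\theta}}_t}(u,\bs{s})}_i = \bs{x}_q^T(u,\bs{s})\Delta\bs{\theta}_i$; since $\Delta\bs{\theta}_i$ is $\mathcal{F}_{t-1}$-measurable, a double application of Jensen gives
\begin{align*}
\abs{\expect*{t-1}{\bs{x}_q^T(u,\bs{s})\Delta\bs{\theta}_i}} \leq \sqrt{\expect*{t-1}{\brk*{\bs{x}_q^T(u,\bs{s})\Delta\bs{\theta}_i}^2}} = \norm{\Delta\bs{\theta}_i}_{\expect*{t-1}{\bs{\Sigma}_{\bs{q}\bs{q}}(u,\bs{s})}}.
\end{align*}
Next, I would plug $\Delta\bs{\psi}_i$ into the positive-semi-definite matrix in \Cref{assumption: lambda min}, expand the resulting quadratic form, and rearrange to isolate the quality-only contribution:
\begin{align*}
(1-\rho_{\min})\norm{\Delta\bs{\theta}_i}_{\expect*{t-1}{\bs{\Sigma}_{\bs{q}\bs{q}}}}^2 \leq \expect*{t-1}{\brk*{\bs{x}_q^T\Delta\bs{\theta}_i + \bs{x}_p^T\Delta\bs{\phi}_i}^2} = \expect*{t-1}{\brk*{\bs{x}^T(u,\bs{s})\Delta\bs{\psi}_i}^2}.
\end{align*}
This is precisely where the $1/\sqrt{1-\rho_{\min}}$ factor in the target bound is born, and it is the only place the identifiability assumption enters.

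Finally, I would apply pointwise Cauchy--Schwarz $\brk*{\bs{x}^T\Delta\bs{\psi}_i}^2 \leq \norm{\bs{x}}_{\bs{V}_t^{-1}}^2 \norm{\Delta\bs{\psi}_i}_{\bs{V}_t}^2$ and take expectations to get $\expect*{t-1}{(\bs{x}^T\Delta\bs{\psi}_i)^2} \leq \norm{\Delta\bs{\psi}_i}_{\bs{V}_t}^2 \expect*{t-1}{\|\bs{x}\|_{\bs{V}_t^{-1}}^2}$; invoking the high-probability confidence bound $\norm{\Delta\bs{\psi}_i}_{\bs{V}_t} \leq \gamma_t(\delta)$ for the projected multinomial MLE (the standard argument adapted from \citet{amani2021ucb}, which is responsible for the $M^2$ and log factors inside $\gamma_t(\delta)$) and chaining the three inequalities closes the proof. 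The main obstacle I anticipate is the third step: carefully justifying $\norm{\Delta\bs{\psi}_i}_{\bs{V}_t} \leq \gamma_t(\delta)$ for \emph{each} per-slot sub-vector $\bs{\psi}_i$ (not just the full vector), since the projected estimator is defined through the joint $\bs{g}(\bs{\psi}|\gH)$ and the design matrix $\bs{V}_t$ aggregates per-example Fisher information with the softmax self-concordance constants baked into $\gamma_t(\delta)$. Once this is established (by the block structure of $\bs{V}_t$ together with the generalized self-concordance of the softmax), the rest is the chain of inequalities above.
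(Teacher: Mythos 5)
Correct, and essentially the paper's own argument: you use linearity of the quality term, \Cref{assumption: lambda min} as a quadratic-form inequality to generate the $\frac{1}{\sqrt{1-\rho_{\min}}}$ factor, Cauchy--Schwarz against $\bs{V}_t$, and the projected-MLE confidence radius $\gamma_t(\delta)$ adapted from \citet{amani2021ucb}, with your per-coordinate worry resolved exactly as in the paper via $\norm{\bs{\psi}^*_i-\bs{\psi}_{i,t}}_{\bs{V}_t}\le\norm{\bs{\psi}^*-\bs{\psi}_{t}}_{\bs{I}_M\otimes\bs{V}_t}$. The only (cosmetic) difference is ordering: the paper applies Cauchy--Schwarz first and then invokes the assumption as the matrix inequality $\expect*{t-1}{\bs{D}_{d_q,d_p}\bs{V}_t\bs{D}_{d_q,d_p}}\preceq\frac{1}{1-\rho_{\min}}\bs{V}_t$ to strip the quality-block projection, whereas you invoke it on the instantaneous expected correlation matrix (via Jensen) before Cauchy--Schwarz---the same use of the assumption at a different point in the chain.
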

\begin{proof}
    For any $i \in [M], \bs{s} \in \gS, t \in [T]$,
    \begin{align*}
        \abs{
        \expect*{t-1}
        {
            \brk[s]*{\qualitybias_{\bs{\theta}^*}(u, \bs{s})}_i
            -
            \brk[s]*{\qualitybias_{\bs{\theta}_t}(u, \bs{s})}_i
        }}
        &=
        \abs{
        \expect*{t-1}
        {
            \bs{x}_q^T(u, \bs{s}) \bs{\theta}^*_i
            -
            \bs{x}_q^T(u, \bs{s}) \bs{\theta}_{i,t}
        }} \\
        &=
        \abs{
        \expect*{t-1}
        {
            \bs{x}^T(u, \bs{s})
            \bs{D}_{d_q, d_p}
            \brk*{
                \bs{\psi}^*_i
                -
                \bs{\psi}_{i,t}
            }
        }} \\
        &\leq
        \expect*{t-1}{
            \abs{
                \bs{x}^T(u, \bs{s})
                \bs{D}_{d_q, d_p}
                \brk*{
                    \bs{\psi}^*_i
                    -
                    \bs{\psi}_{i,t}
                }
            }
        }\\
        &\leq
        \expect*{t-1}{
            \norm{
                \bs{x}(u, \bs{s})
            }_{\bs{V}_t^{-1}}
            \norm{
                \bs{D}_{d_q, d_p}
                \brk*{
                    \bs{\psi}^*_i
                    -
                    \bs{\psi}_{i,t}
                }
            }_{\bs{V}_{t}}
        }
    \end{align*}
    By \Cref{assumption: lambda min}, we have that
    \begin{align}
        \expect*{t-1}
        {
            \bs{D}_{d_q, d_p}
            \bs{V}_t
            \bs{D}_{d_q, d_p}
        }
        &=
        \bs{D}_{d_q, d_p}
        \brk*{
            \lambda \bs{I}_{d_p + d_q}
            +
            \expect*{t-1}
            {
                \sum_{k=1}^{t-1}
                \bs{x}(u_k, \bs{s}_k) \bs{x}^T(u_k, \bs{s}_k)
            }
        } 
        \bs{D}_{d_q, d_p} \nonumber \\
        &=
        \lambda \bs{D}_{d_q, d_p}
        + 
        \expect*{t-1}{
            \begin{pmatrix}
                \bs{\Sigma}_{\bs{q}\bs{q}}(u, \bs{s}) & \bs{0} \\
                \bs{0} & \bs{0}
            \end{pmatrix}
        } \nonumber \\
        &\preceq
        \lambda \bs{I}_{d_q + d_p}
        + 
        \frac{1}{1-\rho}
        \expect*{t-1}{
            \begin{pmatrix}
                \bs{\Sigma}_{\bs{q}\bs{q}}(u, \bs{s}) & \bs{\Sigma}_{\bs{q}\bs{p}}(u, \bs{s}) \\
                \bs{\Sigma}_{\bs{p}\bs{q}}(u, \bs{s}) & \bs{\Sigma}_{\bs{p}\bs{p}}(u, \bs{s})
            \end{pmatrix}
        } \nonumber \\
        &\preceq
        \frac{1}{1-\rho}
        \bs{V}_t.
        \label{eq: application of lambda min assumption}
    \end{align}
    Then,
    \begin{align*}
        &\expect*{t-1}{
            \norm{
                \bs{x}(u, \bs{s})
            }_{\bs{V}_t^{-1}}
            \norm{
                \bs{D}_{d_q, d_p}
                \brk*{
                    \bs{\psi}^*_i
                    -
                    \bs{\psi}_{i,t}
                }
            }_{\bs{V}_{t}}
        } \\
        &\leq
        \sqrt{
        \expect*{t-1}{
            \norm{
                \bs{x}(u, \bs{s})
            }_{\bs{V}_t^{-1}}^2
        }
        \expect*{t-1}{
            \norm{
                \bs{D}_{d_q, d_p}
                \brk*{
                    \bs{\psi}^*_i
                    -
                    \bs{\psi}_{i,t}
                }
            }_{\bs{V}_{t}}^2
        }} \tag{Cauchy–Schwarz ineq.}\\
        &\leq
        \frac{1}{\sqrt{1-\rho_{min}}}
        \sqrt{
        \expect*{t-1}{
            \norm{
                \bs{x}(u, \bs{s})
            }_{\bs{V}_t^{-1}}^2
        }
        \expect*{t-1}{
            \brk*{
                \norm{
                    \bs{\psi}^*_i
                    -
                    \bs{\psi}_{i,t}
                }_{\bs{V}_{t}}^2
            }
        }}
        \tag{\Cref{eq: application of lambda min assumption}} \\
        &\leq
        \frac{1}{\sqrt{1-\rho_{min}}}
        \sqrt{
        \expect*{t-1}{
            \norm{
                \bs{x}(u, \bs{s})
            }_{\bs{V}_t^{-1}}^2
        }
        \expect*{t-1}{
            \norm{
                \bs{\psi}^*_i
                -
                \bs{\psi}_{t}
            }_{\bs{I}_M \otimes \bs{V}_{t}}^2
        }} \\
        &\leq
        \frac{\gamma_t(\delta)}{\sqrt{1-\rho_{min}}}
        \sqrt{
        \expect*{t-1}{
            \norm{
                \bs{x}(u, \bs{s})
            }_{\bs{V}_t^{-1}}^2
        }}.
        \tag{\Cref{lemma: param design matrix upper bound}} \\
    \end{align*}
\end{proof}
This completes the proof.

\newpage

\section{Useful Lemmas}

\begin{lemma}
\label{lemma: multinomial bandit upper bound}
With probability at least $1-\delta$ 
    \begin{align*}
        \abs{
        \expect*{u}
        {
            \sum_{i=1}^M 
            \brk[s]*{\qualitybias_{\bs{\theta}^*}(u, \bs{s})}_i
            \brk3{
            z_i\brk*{\qualitybias_{\bs{\theta}^*}(u, \bs{s}) + \bs{b}_{\bs{\phi}^*}(u, \bs{s}) + \rankbias}
            -
            z_i\brk*{\qualitybias_{\bs{\theta}_t}(u, \bs{s}) + \bs{b}_{\bs{\phi}_t}(u, \bs{s}) + \rankbias}
        }}} \\
        \leq
        \sqrt{M}\gamma_t(\delta)
        \expect*{u}{
            \norm{
                \bs{x}(u, \bs{s})
            }_{\bs{V}_t^{-1}}
        }.
    \end{align*}
\end{lemma}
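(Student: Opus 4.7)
The plan is to reduce the difference of softmax-weighted quality sums to a linear functional of the parameter error $\bs{\psi}^* - \bs{\psi}_t$, and then invoke the confidence bound on this error coming from the projected MLE used in \Cref{alg: qrp-ucb}. The factor $\sqrt{M}$ will come from an $\ell_2$ Cauchy--Schwarz against the bounded quality vector, the geometry of the slate will enter via $\|\bs{x}(u,\bs{s})\|_{\bs{V}_t^{-1}}$, and the rest will be absorbed into $\gamma_t(\delta)$.

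First, by \Cref{assump: disp boundedness} we have $\abs{[\qualitybias_{\bs{\theta}^*}(u,\bs{s})]_i}\le 1$, so applying Cauchy--Schwarz pointwise in $u$ yields
\begin{align*}
\abs{\sum_{i=1}^M [\qualitybias_{\bs{\theta}^*}]_i\bigl(z_i(\disp^*)-z_i(\disp_t)\bigr)}
\le \sqrt{M}\,\norm{\bs{z}(\disp^*)-\bs{z}(\disp_t)}_2,
\end{align*}
where $\disp^*$ and $\disp_t$ denote the saturated dispositions under $(\bs{\theta}^*,\bs{\phi}^*)$ and $(\bs{\theta}_t,\bs{\phi}_t)$ respectively. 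The Jacobian of the softmax at any point equals $\operatorname{diag}(\bs{z})-\bs{z}\bs{z}^\top$ and is bounded in spectral norm by a constant (it is a categorical covariance matrix), so $\bs{z}$ is Lipschitz and $\norm{\bs{z}(\disp^*)-\bs{z}(\disp_t)}_2 \lesssim \norm{\disp^*-\disp_t}_2$. Next, by \Cref{assump: parametric form}, each component of $\disp^*-\disp_t$ is linear in the parameter blocks: $[\disp^*-\disp_t]_i = \bs{x}(u,\bs{s})^\top(\bs{\psi}^*_i-\bs{\psi}_{i,t})$. Applying Cauchy--Schwarz in the $\bs{V}_t^{-1},\bs{V}_t$ weighted norms componentwise and summing over $i$ gives
\begin{align*}
\norm{\disp^*-\disp_t}_2
\le \norm{\bs{x}(u,\bs{s})}_{\bs{V}_t^{-1}}\norm{\bs{\psi}^*-\bs{\psi}_t}_{\bs{I}_M\otimes \bs{V}_t}.
\end{align*}
Taking expectation over $u$ and using that $\bs{\psi}^*-\bs{\psi}_t$ is deterministic given the history, we get the bound $\sqrt{M}\,\norm{\bs{\psi}^*-\bs{\psi}_t}_{\bs{I}_M\otimes\bs{V}_t}\,\mathbb{E}_u\brk[s]*{\norm{\bs{x}(u,\bs{s})}_{\bs{V}_t^{-1}}}$.

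The final step, which is the main technical obstacle, is the high-probability confidence bound
\begin{align*}
\norm{\bs{\psi}^*-\hat{\bs{\psi}}}_{\bs{I}_M\otimes\bs{V}_t}\le \gamma_t(\delta)
\quad\text{w.p.~at least }1-\delta.
\end{align*}
This is the standard guarantee for the projected maximum-likelihood estimator in multinomial logistic bandits; it follows from an analysis of the map $\bs{g}(\cdot\mid\gH_t)$ defined in \Cref{alg: qrp-ucb}, whose Jacobian is the regularized design matrix weighted by softmax derivatives, together with self-concordance/convexity of the multinomial log-loss and a vectorial self-normalized martingale bound for the score (cf.\ \citet{amani2021ucb,faury2020improved,abeille2021instance}). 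Boundedness of the disposition (\Cref{assump: disp boundedness}) gives a uniform lower bound on softmax derivatives, which is where the $e^4$ factor in $\gamma_t(\delta)$ enters, while the regularizer $\lambda\bs{I}$ and \Cref{assump: param boundedness} produce the $\sqrt{\lambda M}L$ term; the remaining logarithmic factor is a standard determinant-trace argument. Pulling $\gamma_t(\delta)$ out of the $u$-expectation then yields the claimed inequality up to the constant absorbed by the usage of this lemma in the proof of \Cref{thm: regret upper bound}.
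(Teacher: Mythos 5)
Your proposal is correct and takes essentially the same route as the paper, whose own proof is just a pointer to Appendix C.2 of \citet{amani2021ucb}: the $\sqrt{M}$ factor from Cauchy--Schwarz against the bounded quality vector, a softmax mean-value/Lipschitz step with the eigenvalue-ratio factor $e^4M^2$ folded into $\gamma_t(\delta)$, and the projected-MLE confidence bound $\norm{\bs{\psi}^*-\hat{\bs{\psi}}}_{\bs{I}_M\otimes\bs{V}_t}\le\gamma_t(\delta)$ (the paper's \Cref{lemma: param design matrix upper bound}). Your write-up simply makes those deferred steps explicit, so there is nothing substantive to add.
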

\begin{proof}
    The proof follows the same step as the proof in Appendix C.2 of \citet{amani2021ucb}, using the fact that $\norm{\qualitybias}_2 = \sqrt{\sum_{i=1}^M\norm{\qualitybias_i}_2^2} = \sqrt{M}$, and, e.g., Eq. (29) of \citet{amani2021ucb} for which
    $
        \sup \lambda_{\max}(\bs{A})
        \cdot
        \inf 1/\lambda_{\min}(\bs{A})
        \leq
         e^2\brk*{1 + Me}^2
         \leq 
         2e^4M^2.
    $
    Note that this paper uses a different definition of $\gamma_t$ than in \citet{amani2021ucb}.
\end{proof}

\begin{lemma}
\label{lemma: param design matrix upper bound}
    With probability at least $1-\delta$
    \begin{align*}
        \expect*{u}{
            \norm{
                \bs{\psi}^*
                -
                \bs{\psi}_t
            }_{\bs{I}_M \otimes \bs{V}_{t}}^2
        }
        \leq
        \gamma_t(\delta)
    \end{align*}
\end{lemma}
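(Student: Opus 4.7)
\medskip

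\noindent\textbf{Proof proposal.} My plan is to follow the template for projected-MLE confidence sets in multinomial logistic bandits (cf.\ Faury et al.\ 2020; Amani et al.\ 2021), adapted to the concatenated parameter $\bs{\psi}=(\bs{\theta}_1,\ldots,\bs{\phi}_M)$. Since $\bs{V}_t$ and $\bs{\psi}_t=\hat{\bs{\psi}}$ are measurable w.r.t.\ $\gH_t$, the outer expectation over $u$ is vacuous, and it suffices to show
$\|\bs{\psi}^\ast-\hat{\bs{\psi}}\|_{\bs{I}_M\otimes\bs{V}_t}\le\sqrt{\gamma_t(\delta)}$ w.p.\ $1-\delta$. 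The route has three pieces: (i) relate $g(\bs{\psi}^\ast)-g(\hat{\bs{\psi}}^{\mathrm{ML}})$ to a martingale noise term through the first-order conditions of $\gL_\lambda$; (ii) propagate this through the projection step; (iii) convert a bound in the $(\bs{I}_M\otimes\bs{V}_t)^{-1}$-norm on $g$ into a bound in the $\bs{I}_M\otimes\bs{V}_t$-norm on $\bs{\psi}$ via strong monotonicity of $g$.

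For step (i), write the per-round score $\bs{\xi}_k=\bs{y}_k-\bs{z}(\disp_{\bs{\psi}^\ast}(u_k,\bs{s}_k))$, where $\bs{y}_k\in\{0,1\}^M$ is the one-hot encoding of $c_k$ (zero vector if $c_k=\emptyset$). The KKT condition $\nabla\gL_\lambda(\hat{\bs{\psi}}^{\mathrm{ML}}|\gH_t)=0$ together with the definition of $g$ gives, up to the $\lambda\bs{\psi}$ regularization term,
\begin{equation*}
    g(\bs{\psi}^\ast|\gH_t)-g(\hat{\bs{\psi}}^{\mathrm{ML}}|\gH_t)
    \;=\;
    \bs{S}_t + \lambda\bs{\psi}^\ast,
    \qquad
    \bs{S}_t \;=\; \sum_{k<t}\bs{\xi}_k\otimes\bs{x}(u_k,\bs{s}_k).
\end{equation*}
Because $\bs{\xi}_k$ is a bounded, componentwise $1$-subGaussian (in fact $[-1,1]^M$) martingale difference w.r.t.\ the history, a vector-valued self-normalized concentration inequality (Abbasi-Yadkori et al.\ 2011, applied to each of the $M$ coordinates and union-bounded, or directly in tensor form) gives $\|\bs{S}_t\|_{(\bs{I}_M\otimes\bs{V}_t)^{-1}}^2\le 4\bigl(\log(1/\delta)+Md\log(1+t/(\lambda d))\bigr)$ with probability $1-\delta$. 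Combined with $\|\lambda\bs{\psi}^\ast\|_{(\bs{I}_M\otimes\bs{V}_t)^{-1}}\le\sqrt{\lambda}\|\bs{\psi}^\ast\|_2\le\sqrt{\lambda M}L$ from Assumption~\ref{assump: param boundedness}, the triangle inequality yields the base bound on $\|g(\bs{\psi}^\ast)-g(\hat{\bs{\psi}}^{\mathrm{ML}})\|_{(\bs{I}_M\otimes\bs{V}_t)^{-1}}$.

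For step (ii), the defining property of the projected estimator $\hat{\bs{\psi}}$ (it minimizes $\|g(\cdot)-g(\hat{\bs{\psi}}^{\mathrm{ML}})\|_{(\bs{I}_M\otimes\bs{V}_t)^{-1}}$, with $\bs{\psi}^\ast$ in the feasible set) gives $\|g(\hat{\bs{\psi}})-g(\hat{\bs{\psi}}^{\mathrm{ML}})\|_{(\bs{I}_M\otimes\bs{V}_t)^{-1}}\le\|g(\bs{\psi}^\ast)-g(\hat{\bs{\psi}}^{\mathrm{ML}})\|_{(\bs{I}_M\otimes\bs{V}_t)^{-1}}$, so by the triangle inequality
$\|g(\hat{\bs{\psi}})-g(\bs{\psi}^\ast)\|_{(\bs{I}_M\otimes\bs{V}_t)^{-1}}\le 2\bigl(\sqrt{\lambda M}L+2\sqrt{\log(1/\delta)+Md\log(1+t/(\lambda d))}\bigr)$ on the high-probability event above.

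Step (iii) is the main obstacle and where the $4e^4M^2$ factor in $\gamma_t$ appears. Writing $g(\hat{\bs{\psi}})-g(\bs{\psi}^\ast)=\bs{G}_t(\hat{\bs{\psi}}-\bs{\psi}^\ast)$ with the mean-value matrix $\bs{G}_t=\int_0^1\nabla g(\bs{\psi}^\ast+\tau(\hat{\bs{\psi}}-\bs{\psi}^\ast))\,d\tau$, I need the operator bound $\bs{G}_t\succeq c(\bs{I}_M\otimes\bs{V}_t)$ for some constant $c>0$. Here $\nabla g$ equals $\bs{I}+\sum_k \bs{J}_{\bs{z}}(\disp_{\bs{\psi}}(u_k,\bs{s}_k))\otimes\bs{\Sigma}_x(u_k,\bs{s}_k)$, so the question reduces to lower-bounding the smallest eigenvalue of the softmax Jacobian $\bs{J}_{\bs{z}}$. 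By Assumption~\ref{assump: disp boundedness} (and boundedness of $\rankbias$), dispositions lie in $[-3,3]$, and a direct calculation on the softmax (as in Amani et al., Lemma~\nobreak2 / Eq.~(29)) gives $\lambda_{\min}(\bs{J}_{\bs{z}}(\disp))\ge (2e^4M^2)^{-1}$ uniformly. Feeding this into $\bs{G}_t$ produces $c=(2e^4M^2)^{-1}$, and hence
\begin{equation*}
    \|\hat{\bs{\psi}}-\bs{\psi}^\ast\|_{\bs{I}_M\otimes\bs{V}_t}
    \;\le\;
    \tfrac{1}{c}\,\|g(\hat{\bs{\psi}})-g(\bs{\psi}^\ast)\|_{(\bs{I}_M\otimes\bs{V}_t)^{-1}}
    \;\le\;
    4e^4M^2\Bigl(\sqrt{\lambda M}L+2\sqrt{\log(1/\delta)+Md\log(1+t/(\lambda d))}\Bigr),
\end{equation*}
which, after squaring, is exactly $\gamma_t(\delta)$. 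The delicate part is verifying the softmax Jacobian lower bound on the relevant bounded domain and keeping track of the Kronecker/tensor structure so that the $\bs{I}_M\otimes\bs{V}_t$ norm is obtained rather than a slate-dependent weighted norm; everything else is bookkeeping.
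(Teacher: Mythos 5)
Your reconstruction is essentially the paper's own route: the paper disposes of this lemma by directly invoking Lemma 14 of Amani et al.\ (2021), and the three-step argument you give (self-normalized concentration for the score term plus the $\sqrt{\lambda M}L$ regularization contribution, the projection step, and strong monotonicity of $g$ via the uniform softmax-Jacobian eigenvalue bound that produces the $e^4M^2$ factor) is precisely the content of that cited lemma, adapted to the concatenated parameter. The only wrinkle is the squared-versus-unsquared norm at the end --- your final display bounds the \emph{norm} by $\gamma_t(\delta)$, so squaring would give $\gamma_t(\delta)^2$ rather than $\gamma_t(\delta)$ --- but this mirrors an inconsistency already present in how the paper states the lemma and then uses $\gamma_t$ in the proof of the subsequent quality-bound lemma, so it is bookkeeping rather than a gap in your argument.
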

\begin{proof}
    Follows from Lemma 14 of \citet{amani2021ucb}.
\end{proof}

\end{document}